\newcommand{\K}{\mathbf{K}} 
\newcommand{\A}{\mathbf{A}} 
\newcommand{\B}{\mathbf{B}} 
\newcommand{\E}{\mathbb{E}}
\newcommand{\x}{\mathbf{x}} 
\newcommand{\M}{\mathbf{M}} 
\newcommand{\y}{\mathbf{y}}
\newcommand{\C}{\mathbf{C}}
\newcommand{\rr}{\mathbf{r}}
\newcommand{\z}{\mathbf{z}}
\newcommand{\p}{\mathbf{p}} 
\newcommand{\I}{\mathbf{I}}
\newcommand{\R}{\mathbb{R}}
\newcommand{\h}{\mathbf{h}}
\DeclareMathOperator*{\argmin}{arg\,min}
\title{Serverless Straggler Mitigation using Local Error-Correcting Codes}
\author{\IEEEauthorblockN{Vipul Gupta$^\star$, Dominic Carrano$^\star$, Yaoqing Yang, Vaishaal Shankar, Thomas Courtade and Kannan Ramchandran
\thanks{$^\star$Equal contribution.}
}
\IEEEauthorblockA{Department of EECS, UC Berkeley
}
}
\newtheorem{theorem}{Theorem}
\newtheorem{corollary}{Corollary}
\newtheorem{definition}{Definition}
\newtheorem{remark}{Remark}
\begin{document}

\maketitle

\begin{abstract}

Inexpensive cloud services, such as serverless computing, are often vulnerable to straggling nodes that increase end-to-end latency for distributed computation.
We propose and implement simple yet principled approaches for straggler mitigation in serverless systems for matrix multiplication and evaluate them on several common applications from machine learning and high-performance computing. The proposed schemes are inspired by error-correcting codes and employ parallel encoding and decoding over the data stored in the cloud using serverless workers. This creates a fully distributed computing framework without using a master node to conduct encoding or decoding, which removes the computation, communication and storage bottleneck at the master.
On the theory side, we establish that our proposed scheme is asymptotically optimal in terms of decoding time and provide a lower bound on the number of stragglers it can tolerate with high probability. 
Through extensive experiments, we show that our scheme outperforms existing schemes such as speculative execution and other coding theoretic methods by at least $25\%$.
\end{abstract}

\section{Introduction}


We focus on a recently introduced cloud service called \emph{serverless computing} for general distributed computation. Serverless systems have garnered significant attention from industry (e.g., Amazon Web Services (AWS) Lambda, Microsoft Azure Functions, Google Cloud Functions) as well as the research community (see, e.g.,~\cite{pywren,serverless_computing,numpywren, hellerstein2018serverless,oversketch,osn, berkeley_view,serverless_faaster_better_cheaper}).
Serverless platforms\footnote{The name serverless is an oxymoron since all the computing is still done on servers, but the name stuck as it abstracts away the need to provision or manage servers.} penetrate a large user base by removing the need for complicated cluster management while providing greater scalability and elasticity \cite{serverless_computing,pywren,numpywren}. 
For these reasons, serverless systems are expected to abstract away today's cloud servers in the coming decade just as cloud servers abstracted away physical servers in the past decade \cite{serverless_faaster_better_cheaper,berkeley_view,medium_future_serverless}. 

However, system noise in inexpensive cloud-based systems results in subsets of slower nodes, often called \emph{stragglers}, which significantly slow the computation.
This system noise is a result of limited availability of shared resources, network latency, hardware failure, etc. \cite{tailatscale, hoefler}. 
Empirical statistics for worker job times are shown in Fig. \ref{fig:stragglers} for AWS Lambda. Notably, there are a few workers ($\sim$$2\%$) that take much longer than the median job time, severely degrading the overall efficiency of the system. 

Techniques like speculative execution have been traditionally used to deal with stragglers (e.g., in Hadoop MapReduce \cite{mapreduce} and Apache Spark \cite{spark}). Speculative execution works by detecting workers that are running slowly, or will slow down in the future, and then assigning their jobs to new workers without shutting down the original job. The worker that finishes first submits its results. This has several drawbacks:  constant monitoring of jobs is required, which is costly when the number of workers is large. 
Monitoring is especially difficult in serverless systems where worker management is done by the cloud provider and the user has no direct supervision over the workers.
Moreover, it is often the case that a worker straggles only at the end of the job (say, while communicating the results).  By the time the job is resubmitted, the additional communication and computational overhead would have decreased the overall efficiency of the system. \looseness=-1


\begin{figure}
        \centering
        \includegraphics[scale=0.43]{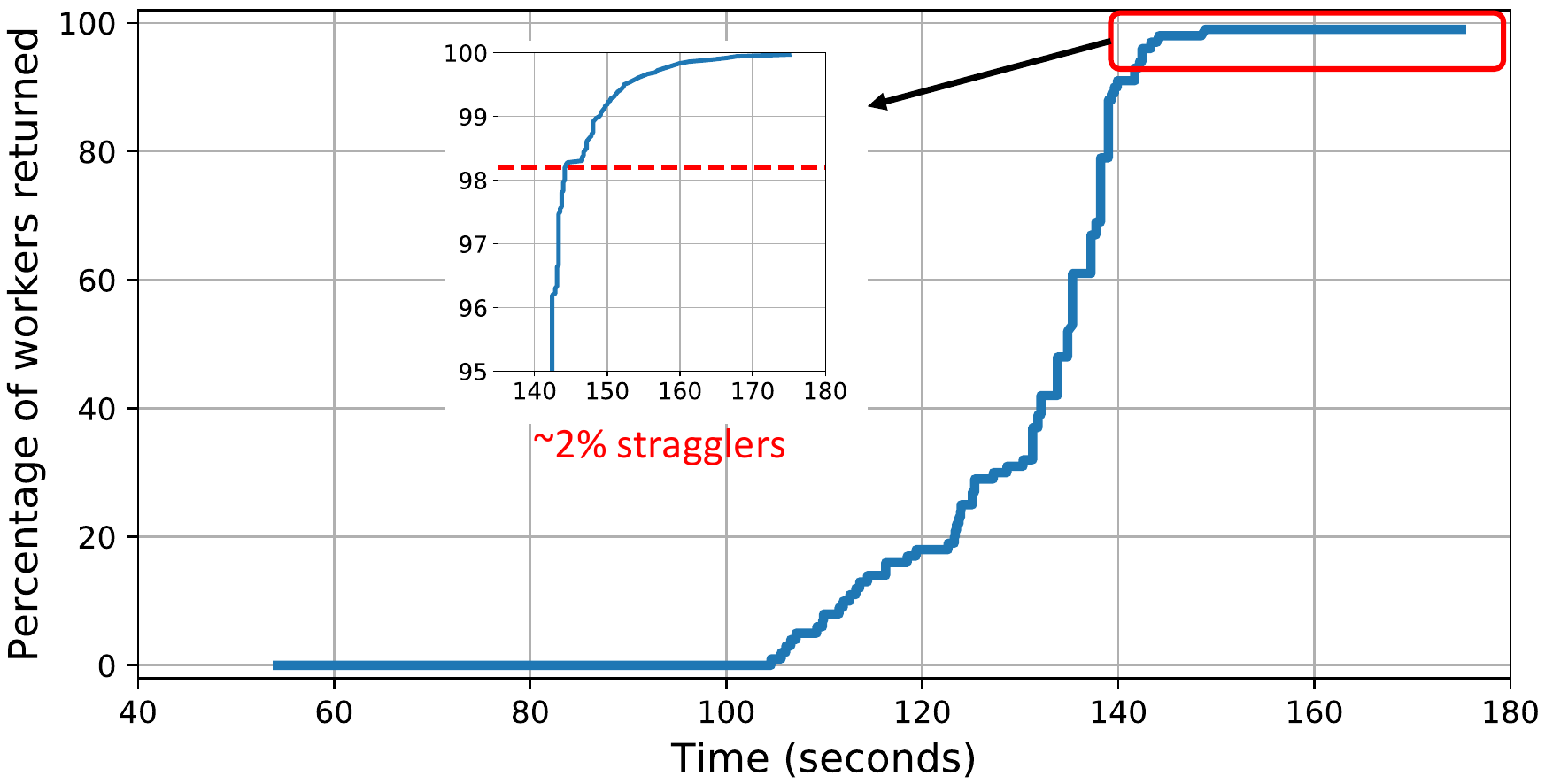}
        \caption{\small Distribution of job completion times for distributed matrix multiplication over 3600 AWS Lambda workers averaged over 10 trials. The median job time is $\sim$$135$ seconds, while around $2\%$ of the nodes straggle consistently.
        }
        \label{fig:stragglers}
\end{figure}

\subsection{Existing Work}

\begin{figure*}[t]
    \centering
    \includegraphics[height=1.4in]{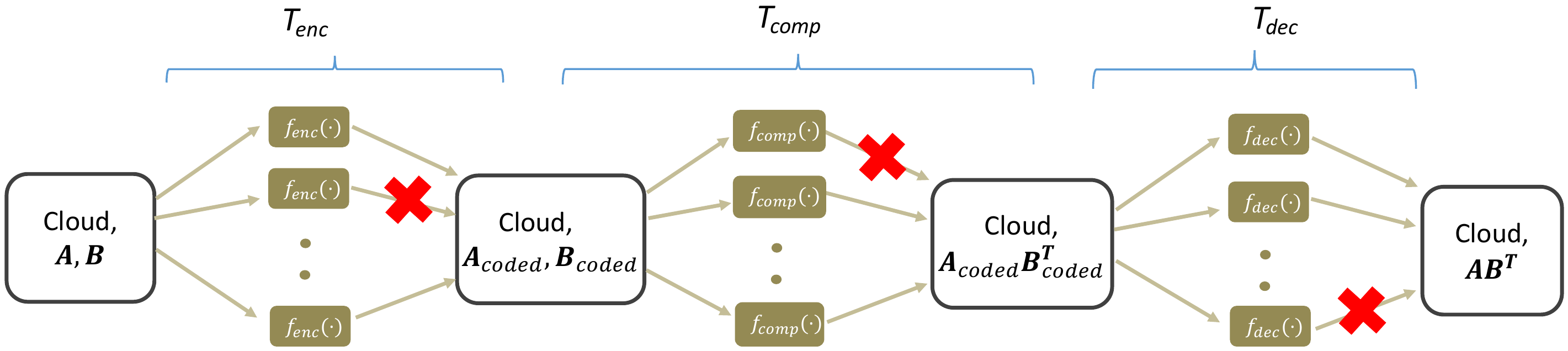}
    \caption{\small Typical workflow on a serverless system for computing the matrix multiplication $\A\B^T$. Here, $f_{\text enc}, f_{\text comp}$ and $f_{\text dec}$ denote the functions corresponding to encoding, computation, and decoding, respectively, that are employed at the serverless workers (in parallel on different data points).
    Whereas most existing schemes focus on minimizing time required to compute the product ($T_{comp}$), our focus is on minimizing the end-to-end latency that involves parallel encoding ($T_{enc}$) and decoding ($T_{dec}$) times as well. 
    }
    \label{fig:local_codes}
\end{figure*}

Error correcting codes are a linchpin of digital transmission and storage technologies, vastly improving their efficiency compared to uncoded systems. Recently, there has been a significant amount research focused on applying coding-theoretic ideas to introduce redundancy into distributed computation for improved straggler and fault resilience, see, e.g., \cite{kangwook1,grad_coding,kangwook2,tavor,poly_codes,matdot, bartan2019polar, jeong2018locally,coded2.5d,krishna_poly_codes,jingge, dutta_shortdot, rashmi_inference, yang_substitute_dec, grover_inverse, abbe_comp_comm}.

This line of work focuses on cloud computing models consistent with first-generation cloud platforms (i.e., ``serverful" platforms), where the user is responsible for node management through a centralized master node that coordinates encoding, decoding and any update phases.
Accordingly, most existing  schemes typically employ variants of Maximum Distance Separable (MDS) codes, and have focused on optimizing the recovery threshold (i.e., minimum number of machines needed to do a task) of the algorithm, e.g. \cite{poly_codes,matdot}.
This is equivalent to minimizing the compute time while assuming that the encoding/decoding times are negligible.
When the system size is relatively small, the encoding/decoding costs can be safely ignored. 
However, the encoding/decoding costs of such coded computation schemes scale with the size of the system, and hence this assumption does not hold anymore for serverless systems that can invoke tens of thousands of workers \cite{numpywren,berkeley_view,osn}. Furthermore, existing schemes require a powerful master with high bandwidth and large memory to communicate and store all the data to perform encoding and decoding locally. This goes against the very idea of massive scale distributed computation.
Therefore, coding schemes designed for serverful systems cannot guarantee low end-to-end latency in terms of total execution time for large-scale computation in serverless systems.
\looseness=-1

To formalize this problem, we consider the typical workflow of a serverless system for the task of matrix-matrix multiplication (see Fig.~\ref{fig:local_codes}).
First, worker machines read the the input data from the cloud, jointly encode the data, and write the encoded data to the cloud ($T_\text{enc}$). 
Then, the workers start working on their tasks using the encoded data, and write back the product of coded matrices back to the cloud memory.
Denote the joint compute time (including the time to communicate the task results to the cloud) $T_{\text comp}$.
Once a \emph{decodable} set of task results are collected, the workers start running the decoding algorithm to obtain the final output (which takes $T_\text{dec}$ time). 
Note that all of these phases are susceptible to straggling workers.
Hence, one can write the total execution time of a coded computing algorithm as
$T_\text{tot,coded} = T_\text{enc} + T_{\text comp} + T_\text{dec}.$
The key question that we ask is how to minimize end-to-end latency, $T_\text{tot, coded}$, that comprises encoding, decoding and computation times, where all of these phases are performed in parallel by serverless workers.

\subsection{Main Contribution}

In this work, we advocate principled, coding-based approaches to accelerate distributed computation in serverless computing. Our goals span both theory and practice: we develop coding-based techniques to solve common machine learning problems on serverless platforms in a fault/straggler resilient manner, analyze their runtime and straggler tolerance, and implement them on AWS Lambda for several popular applications. 

Generally, computations underlying several linear algebra and optimization problems tend to be iterative in nature. With this in mind, we aim to develop general coding-based approaches for straggler-resilient computation which meet the following criteria: (1) Encoding over big datasets should be performed once.  In particular,  the cost for encoding the data for straggler-resilient computation will be amortized over iterations. (2) Encoding and decoding should be low-complexity and require at most linear time and space in the size of the data. (3) Encoding and decoding should be amenable to a parallel implementation. This final point is particularly important when working with large datasets on serverless systems due to the massive scale of worker nodes and high communication latency. \looseness=-1



It is unlikely that there is a ``one-size-fits-all" methodology which meets the above criteria and introduces straggler resilience for any problem of interest.  Hence, we propose to focus our efforts on a few fundamental operations including matrix-matrix multiplication and matrix-vector multiplication, since these form atomic operations for many large-scale computing tasks. Our developed algorithms outperform speculative execution and other popular coding-based straggler mitigation schemes by at least $25\%$. 
We demonstrate the advantages of using the developed coding techniques on several applications such as alternating least squares, SVD, Kernel Ridge Regression, power iteration, etc.

\section{Straggler Resilience in Serverless Computing Using Codes}
\subsection{Distributed Matrix-Vector Multiplication}

\begin{figure*}[h!]
    \centering
    \begin{subfigure}[t]{0.45\textwidth}
        \centering
        \includegraphics[scale=0.45]{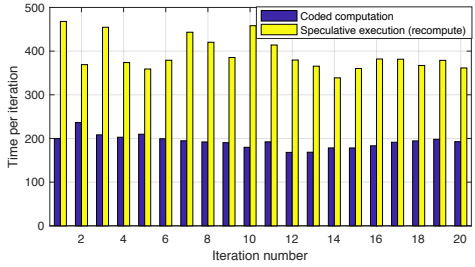}
        \caption{Per iteration time for power iteration}
    \end{subfigure}
    ~~
    \begin{subfigure}[t]{0.45\textwidth}
        \centering
        \includegraphics[scale=0.45]{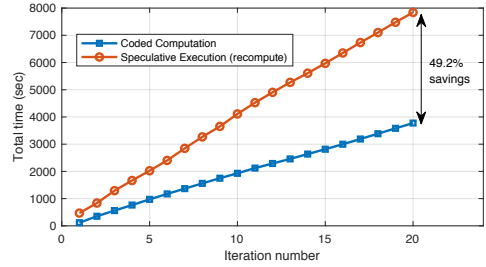}
        \caption{Running time for power iteration for 20 iterations}
    \end{subfigure}
    \caption{{\small Coded computing versus speculative execution for power iteration on a matrix of dimension 0.5 million for 20 iterations.}}
\label{fig:PI_stats}
\end{figure*} 

The main objective of this section is to show that coding schemes can hugely benefit serverless computing by implementing coded matrix-vector multiplication on AWS Lambda.  
Computing $\y = \A\x$, for a large matrix $\A$, is a frequent bottleneck of several popular iterative algorithms such as gradient descent, conjugate gradient, power iteration, etc. Many coding theory based techniques for straggler-resilient matrix vector multiplication have been proposed in the literature (e.g. see \cite{kangwook1, grover_inverse,dutta_shortdot,tavor}). 
We refer the reader to Fig. 2 in \cite{kangwook1} for an illustration.
Fortunately, many of these schemes can be directly employed in serverless systems since the encoding can be done in parallel and the decoding over the resultant output for computing $\y$ is inexpensive as it is performed over a vector. Note that a direct applicability is not true for all operations (such as matrix-matrix multiplication), as we will see later in Section \ref{sec:local_matmul}.  \looseness=-1


To illustrate the advantages of coding techniques over speculative execution, we implement power iteration on the serverless platform AWS Lambda. 
Power iteration requires a matrix-vector multiplication in each iteration and gives the dominant eigenvector and corresponding eigenvalue of the matrix being considered.
Power iteration constitutes an important component for several popular algorithms such as PageRank and Principal Component Analysis (PCA). PageRank is used by Google to rank documents in their search engine \cite{page1999pagerank} and by Twitter to generate recommendations of who to follow \cite{gupta2013wtf}. 
PCA is commonly employed as a means of dimensionality reduction in applications like data visualization, data compression and noise reduction \cite{pca_svd_ml_app}.

We applied power iteration to a square matrix of dimension $(\text{0.5 million})^2$ using 500 workers on AWS Lambda in the Pywren framework \cite{pywren}. A comparison of compute times of coded computing with speculative execution is shown in Fig. \ref{fig:PI_stats}, where a $2\times$ speedup is achieved\footnote{For our experiments on matrix-vector multiplication, we used the coding scheme proposed in \cite{tavor} due to its simple encoding and decoding that takes linear time. However, we observed that using other coding schemes that are similar, such as the one proposed in \cite{kangwook1}, result in similar runtimes.}.
Apart from being significantly faster than speculative execution, another feature of coded computing is reliability, that is, almost all the iterations take a similar amount of time ($\sim$$200$ seconds) compared to speculative execution, the time for which varies between 340 and 470 seconds. We demonstrate this feature of coded computing throughout our experiments in this paper. 

\subsection{Distributed Matrix-Matrix Multiplication}\label{sec:local_matmul}

Large-scale matrix-matrix multiplication is a frequent computational bottleneck in several problems in machine learning and high-performance computing and has received significant attention from the coding theory community (e.g. see \cite{kangwook2,tavor,poly_codes,matdot,bartan2019polar,jeong2018locally,coded2.5d}). The problem is computing
\begin{equation}
\A \mathbf{B}^T = \C, ~\text{where}~ \A \in \R^{m \times n} ~\text{and}~ \B \in \R^{\ell \times n}.
\label{matmul_equation}
\end{equation} 

{\bf Proposed Coding Scheme}: For straggler-resilient matrix multiplication, we describe our easy-to-implement coding scheme below. First, we encode the row-blocks of $\A$ and $\B$ in parallel by inserting a parity block after every $L_A$ and $L_B$ blocks of $\A$ and $\B$, respectively, where $L_A$ and $L_B$ are parameters chosen to control the amount of redundancy the code introduces. This produces encoded matrices $\A_\text{coded}$ and $\B_\text{coded}$. As $L_A$ and $L_B$ are increased, the parity blocks become more spread out, and the code has less redundancy. For example, when $L_A = L_B = 1$, every row of the matrices $\A$ and $\B$ is duplicated (and, hence, has $100\%$ redundancy). At the other extreme, when $L_A$ and $L_B$ are set equal to the number of row-blocks in $\A$ and $\B$, respectively, there is only one parity row-block added in $\A$ and $\B$, and thus, the code exhibits minimum possible redundancy. 
In Fig. \ref{fig:mat_mul_coding}, an example of the encoded matrix $\A_\text{coded}$ and the resultant output matrix $\C_\text{coded}$ is shown for the case when $\A = \B$ and $L_A=2$.

\begin{figure*}[h!]
    \centering
    \includegraphics[scale=0.5]{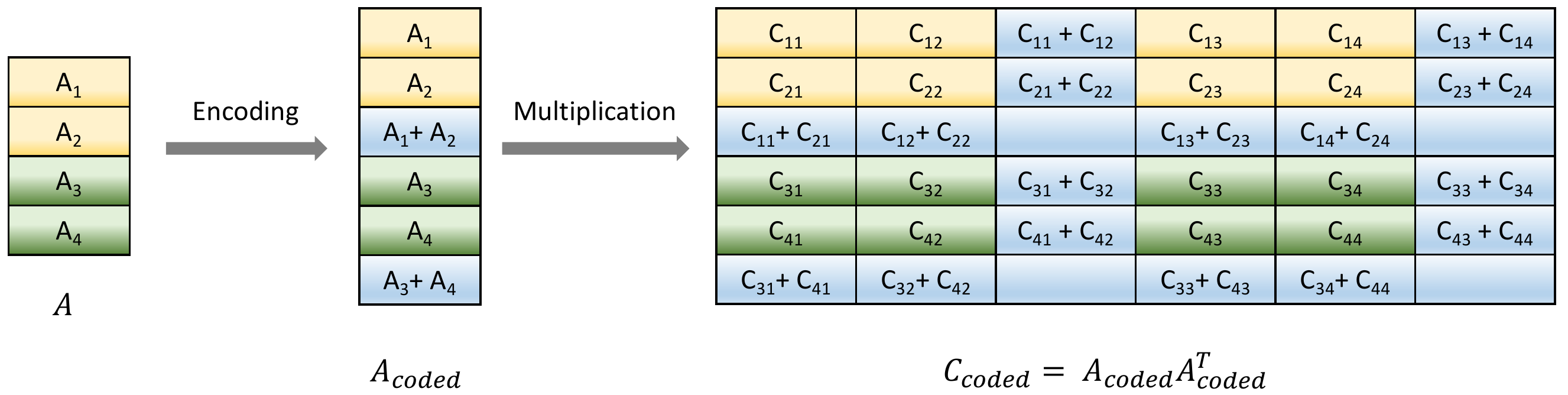}
    \caption{\small Computing $\C = \A\A^T$ where $\A$ is divided into four row-blocks and $L_A = 2$. Here, $\C_{ij} = \A_i\A_j^T$. Locally encoding the rows of $\A$ leads to a locally recoverable code in the output $\C_\text{coded}$.}
\label{fig:mat_mul_coding}
\end{figure*} 

Note the locally recoverable structure of $\C_\text{coded}$: to decode one straggler, only a subset of blocks of $\C_\text{coded}$ need to be read. In Fig. \ref{fig:mat_mul_coding}, for example, only two blocks need to be read to mitigate a straggler. This is unlike polynomial codes which are MDS in nature and, hence, are optimal in terms of recovery threshold but require reading all the blocks from the output matrix while decoding. 
The locally recoverable structure of the code makes it particularly amenable to a parallel decoding approach: $\C_\text{coded}$ consists of $(L_A + 1) \times (L_B + 1)$ submatrices, each of which can be separately decoded in parallel. In Fig. \ref{fig:mat_mul_coding}, there are four such submatrices. We use a simple peeling decoder (for example, see \cite{kangwook2,tavor}) to recover the systematic part of each $(L_A + 1) \times (L_B + 1)$ submatrix, constructing the final result matrix $\C$ from these systematic results.

In the event that any of the submatrices are not decodable due to a large number of stragglers, we recompute the straggling outputs. 
Thus, choosing $L_A$ and $L_B$ presents a trade-off. We would like to keep them small so that we can mitigate more stragglers without having to recompute, but smaller $L_A$ and $L_B$ imply more redundancy in computation and is potentially more expensive. For example, $L_A = L_B = 5$ implies $44\%$ redundancy. 
Later, we will show how to choose the parameters $L_A$ and $L_B$ given an upper bound on the probability of encountering a straggler in the serverless system. We will also prove that with the right parameters, the probability of not being able to decode the missing blocks is negligible. 


\begin{figure}[h!]
    \centering
    \includegraphics[scale=0.65]{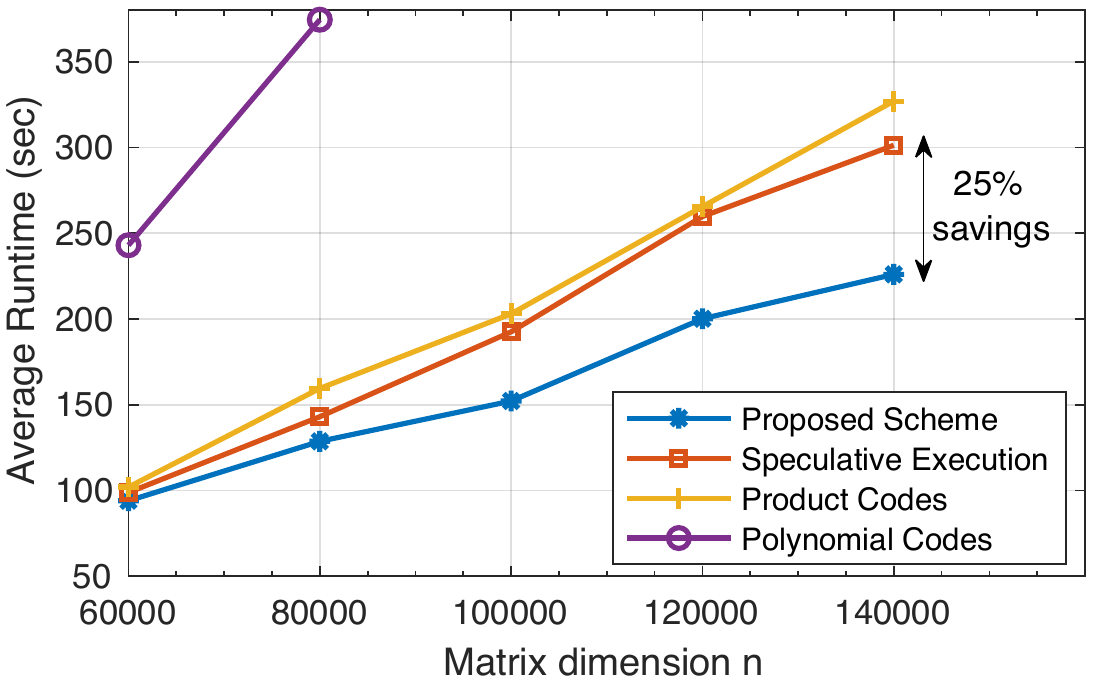}
    \caption{\small Comparison of average runtimes of proposed schemes versus existing schemes for multiplication of two square matrices. For large matrix dimensions, decoding with a polynomial code is not feasible since the master node cannot store all the data locally.}
\label{fig:mat_mul_stats}
\end{figure} 


We refer to the proposed coding scheme in Fig. \ref{fig:mat_mul_coding} as the \textit{local product code}. In Fig. \ref{fig:mat_mul_stats}, we compare the local product code with speculative execution, and existing popular techniques for coded matrix multiplication such as polynomial codes \cite{poly_codes} and product codes \cite{kangwook2}. 
In our experiment, we set $\A$ ($= \B$) to be a square matrix with $L_A = 10$, implying $21\%$ redundancy.
Product codes and polynomial codes were also designed such that the amount of redundancy was $\geq 21\%$. Accordingly, we wait for $79\%$ of the workers to return before starting to recompute in the speculative execution-based approach so that all the methods employed had the same amount of redundancy. We note that the coding-based approach performs significantly better than existing coding-based schemes and at least $25\%$ better than the speculative execution-based approach for large matrix dimensions\footnote{A working implementation of the proposed schemes is available at
https://github.com/vvipgupta/serverless-straggler-mitigation}.

Another important point to note is that existing coding-based approaches perform worse than speculative execution. This is because of the decoding overhead of such schemes. Product codes have to read the entire column (or row) block of $\C_\text{coded}$ and polynomial codes have to read the entire output $\C_\text{coded}$ to decode one straggler. In serverless systems, where workers write their output to a cloud storage and do not communicate directly with the master owing to their `stateless' nature, this results in a huge communication overhead. In fact, for polynomial codes, we are not even able to store the entire output in the memory of the master for larger values of $n$. For this reason, we do not have any global parities---that require reading all the blocks to decode the stragglers---in the proposed local product code. 
Note that existing coding schemes with locality, such as \cite{tavor} and \cite{jeong2018locally}, also have global parities which are dispensable in serverless and, thus, have high redundancy. This is because such schemes were designed for serverful systems where the decoding is not fully distributed. 
Moreover, we show in the next section that local product codes are asymptotically optimal in terms of locality for a fixed amount of redundancy.
In the event the output is not locally decodable in local product codes, we restart the jobs of straggling workers. 
However, we later show that such an event is unlikely if the parameters $L_A$ and $L_B$ are chosen properly. 

\begin{remark}
To mitigate stragglers during encoding and decoding phases, we employ speculative execution. However, in our experiments, we have observed that encoding and decoding times have negligible variance and do not generally suffer from stragglers. This is because the number of workers required during encoding and decoding phases is relatively small (less than $10\%$ of the computation phase) with smaller job times due to locality. The probability of encountering a straggler in such small-scale jobs is extremely low. 
\end{remark}

\begin{remark}
It has been well established in the literature that blocked partitioning of matrices is communication efficient for distributed matrix-matrix multiplication both in the serverful \cite{2.5d,summa} and serverless \cite{oversketch} settings. 
Even though in Fig. \ref{fig:mat_mul_coding} we show partitioning of $\A$ into row-blocks for clarity of exposition, we further partition the input matrices $\A$ (and $\B$) into square blocks in all our experiments and perform block-wise distributed multiplication. 
\end{remark}

\section{Theoretical Analysis of Local Product Codes}

\subsection{Optimality of Local Product Codes}
In coding-theoretic terminology, a locally recoverable code (LRC) is a code where each symbol is a function of small number of other symbols. This number is referred to as the locality, $r$, of the code. In the context of straggler mitigation, this means that each block in $\C_\text{coded}$ is a function of only a few other blocks. Hence, to decode one straggler, one needs to read only $r$ blocks. In the example of Fig. \ref{fig:mat_mul_coding}, the locality is $r=2$ since each block of $\C_\text{coded}$ can be recovered from two other blocks. In general, the locality of the local product code is $\min(L_A, L_B)$. Another important parameter of a code is its minimum distance, $d$, which relates directly to the number of stragglers that can be recovered in the worst case. Specifically, to recover the data of $e$ stragglers in the worst case, the minimum distance must satisfy $d\geq e+1$.

For a fixed redundancy, Maximum Distance Separable (MDS) codes attain the largest possible minimum distance $d$, and thus, are able to tolerate the most stragglers in the worst case. Many straggler mitigation schemes are focused on MDS codes and have gained significant attention, such as polynomial codes \cite{poly_codes}. However, such schemes are not practical in the serverless case since they ignore the encoding and decoding costs. Moreover, as seen from Fig. \ref{fig:mat_mul_stats}, it is better to restart the straggling jobs than to use the parities from polynomial or product codes since the communication overhead during decoding is high.

Hence, in serverless systems, the locality $r$ of the code is of greater importance since it determines the time required to decode a straggler. For any LRC code, the following relation between $d$ and $r$ is satisfied \cite{papailiopoulos2014locally,gopalan2012locality}
\begin{equation}
d \leq n - k - \bigg\lceil \frac{k}{r} \bigg\rceil + 2,
\label{eq:singleton}
\end{equation}
where $k$ is the number of systematic data blocks and $n$ is the total number of data blocks including parities. Now, since we want to tolerate at least one straggler, the minimum distance must satisfy $d \geq 2$. Using $\lceil k/r \rceil \geq k/r$, we conclude that
$n - k - \frac{k}{r}\geq 0$
or, equivalently, 
\begin{equation}
r \geq \frac{k}{n-k}.
\label{r_lower_bound}
\end{equation}
Now, in the case of the local product code, each of the submatrices that can be decoded in parallel represent a product code with $k = L_AL_B$ and $n = (L_A+1)(L_B + 1)$. In Fig. \ref{fig:mat_mul_coding}, there are four locally decodable submatrices with $L_A = L_B = 2, k = 4$ and $n = 9$. Also, we know that the locality for each of the submatrices is $\min(L_A, L_B)$ and hence this is the locality for the local product code. 

Next, we want to compare the locality of the local product code with any other coding scheme with the same parameters, that is, $k = L_AL_B$ and $n = (L_A+1)(L_B + 1)$. Using Eq. \ref{r_lower_bound}, we get
\begin{align*}
r &\geq \frac{L_AL_B}{(L_A+1)(L_B+1) - L_AL_B} = \frac{L_AL_B}{L_A + L_B + 1} \\
& \geq \frac{\min(L_A,L_B)}{2 + o(1)}.
\end{align*}
Thus the locality of local product codes is optimal (within a constant factor) since it achieves the lower bound of locality $r$ for all LRC codes. This is asymptotically better than, say, a local version of polynomial codes (that is, each submatrix of $\C_\text{coded}$ is a polynomial code instead of a product code) for which the locality is $L_AL_B$ since it needs to read all $L_AL_B$ blocks to mitigate one straggler \cite{poly_codes}.

Having shown that local product codes are asymptotically optimal in terms of decoding time, we further quantify the decoding time in the serverless case through probabilistic analysis next. 

\subsection{Decoding Costs}\label{sec:decoding_costs}

Stragglers arise due to system noise which is beyond the control of the user (and maybe even the cloud provider, for example, unexpected network latency or congestion due to a large number of users). However, a good estimate for an upper bound on the number of stragglers can be obtained through multiple experiments. In our theoretical analysis, we assume that the probability of a given worker straggling is fixed as $p$, and that this happens independently of other workers. In AWS Lambda, for example, we obtain an upper bound on the number of stragglers through multiple trial runs and observe that less than $2\%$ of the nodes straggle in most trials (also noted from Fig. \ref{fig:stragglers}). Thus, a conservative estimate of $p=0.02$ is assumed for AWS Lambda. 

Given the high communication latency in serverless systems, codes with low I/O overhead are highly desirable, making locally recoverable codes a natural fit.
For local product codes, say the decoding worker operates on a grid of $n= (L_A + 1) \times (L_B + 1)$ blocks.  
If a decoding worker sees a single straggler, it reads $\min(L_A,L_B)$ blocks to recover it. However, when there are more than one stragglers, at most $L = \max(L_A, L_B)$ block reads will occur per straggler during recovery. For example, if $L_A>L_B$ and there are two stragglers in the same row, the decoding worker read $L_A$ rows per straggler. 
Thus, if a decoding worker gets $S$ stragglers, a total of at most $SL$ block reads will occur---there are at most $L$ block reads for each of the $S$ stragglers. Since the number of stragglers, $S$, is random, the number of blocks read, say $R$, is also random. Note that $R$ scales linearly with the communication costs.

In Theorem \ref{thm:decoding_costs}, we quantify the decoding costs for local product codes; specifically, we show that the probability of a decoding worker reading a large number of blocks is small.
\begin{theorem}\label{thm:decoding_costs}
Let $p$ be the probability that a serverless worker straggles independently of others, and $R$ be the number of blocks read by a decoding worker working on $n = (L_A+1)(L_B+1)$ blocks. Also, let $L = \max(L_A,L_B)$. Then, the probability that the decoding worker has to read more than $x$ blocks is upper bounded by
\begin{align*}
    \Pr(R \geq x) \leq \left(\dfrac{x}{npL}\right)^{-x / L}e^{-\frac{x}{L} + np}
\end{align*}
\end{theorem}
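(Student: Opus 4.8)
The plan is to reduce the claim to a Chernoff-type tail bound on a binomial count. The key deterministic fact, already argued in the text preceding the theorem, is that a decoding worker facing $S$ stragglers performs at most $SL$ block reads, so $R \le SL$ surely. Hence $\{R \ge x\} \subseteq \{S \ge x/L\}$ and it suffices to control $\Pr(S \ge x/L)$. Since each of the $n = (L_A+1)(L_B+1)$ blocks handled by the worker straggles independently with probability $p$, the count $S$ is a sum of $n$ i.i.d.\ Bernoulli$(p)$ variables, i.e.\ $S \sim \mathrm{Binomial}(n,p)$, with mean $np$.

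First I would invoke the exponential Markov inequality: for every $\theta > 0$, $\Pr(S \ge t) \le e^{-\theta t}\,\E[e^{\theta S}]$. By independence the moment generating function factorizes, and the elementary bound $1+u \le e^u$ gives $\E[e^{\theta S}] = \big(1 - p + p e^{\theta}\big)^n \le \exp\big(np(e^{\theta}-1)\big)$. Combining, $\Pr(S \ge t) \le \exp\big(np(e^{\theta}-1) - \theta t\big)$ for all $\theta > 0$, so the problem collapses to minimizing the exponent $g(\theta) = np(e^{\theta}-1) - \theta t$.

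Then I would optimize: $g'(\theta) = np\,e^{\theta} - t = 0$ yields the minimizer $\theta^{\star} = \ln\big(t/(np)\big)$, which is positive exactly in the relevant tail regime $t > np$. Substituting $\theta^{\star}$ gives $\Pr(S \ge t) \le \big(t/(np)\big)^{-t} e^{\,t-np}$, and setting $t = x/L$ produces a bound of the form $\big(x/(npL)\big)^{-x/L} e^{\,x/L - np}$ asserted in Theorem~\ref{thm:decoding_costs}. The inclusion $\{R \ge x\} \subseteq \{S \ge x/L\}$ then transfers this tail bound from $S$ to $R$, completing the argument.

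The Chernoff computation is routine; the step that actually needs care is the deterministic inequality $R \le SL$. One must check the worst-case read count on the $(L_A+1)\times(L_B+1)$ grid, namely that each straggler forces at most $L = \max(L_A,L_B)$ reads (for instance, two stragglers sharing a common row are each repaired from a length-$L$ line), since any slack in this constant would enter directly into the exponent $x/L$. A secondary caveat worth stating is that the bound is only non-trivial when $x/(npL) > 1$, i.e.\ when $\theta^{\star} > 0$; for smaller $x$ the right-hand side exceeds $1$ and the statement is vacuous.
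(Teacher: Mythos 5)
Your argument follows exactly the route of the paper's own proof: the deterministic reduction $R \le SL$, the exponential Markov inequality applied to $S \sim \mathrm{Binomial}(n,p)$, the bound $1+u \le e^u$ on the MGF, and optimization at $\theta^\star = \ln\bigl(t/(np)\bigr)$. Your computation is also correct, and it yields
\begin{equation*}
\Pr(R \ge x) \;\le\; \left(\frac{x}{npL}\right)^{-x/L} e^{\,x/L \,-\, np},
\end{equation*}
which is the standard Poisson-type Chernoff tail $e^{-np}\bigl(enpL/x\bigr)^{x/L}$ evaluated at $t = x/L$.

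The gap is in your final sentence: this is \emph{not} the bound asserted in Theorem~\ref{thm:decoding_costs}. The theorem states the exponential factor as $e^{-x/L + np}$, i.e.\ with the exponent negated relative to what you derived, and you declared the two to be the same without checking. The two expressions differ by the factor $e^{-2(x/L - np)}$, so in the only meaningful regime $x/L > np$ the printed bound is strictly \emph{stronger} than the Chernoff bound---and it is in fact false in general. Concretely, take $L_A = L_B = 1$ (so $n = 4$, $L = 1$), $p = 0.1$, $x = 1$: the printed bound evaluates to $0.4\,e^{-0.6} \approx 0.22$, yet the event of exactly one straggler already forces $R \ge 1$ and has probability $4(0.1)(0.9)^3 \approx 0.29$. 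The source of the discrepancy is a sign error in the paper's proof: applying $1 - y \le e^{-y}$ to the binomial MGF gives the exponent $-tx + np\,e^{tL} - np$, but the paper writes it as $-tx + np - np\,e^{tL}$; substituting the (correct) optimizer $t^\star = \frac{1}{L}\ln\bigl(\frac{x}{npL}\bigr)$ into that sign-flipped expression produces exactly the statement of the theorem. So your derivation is the right one and gives the bound the theorem should state; but read literally, the theorem cannot be proved by this argument (or at all), and a faithful write-up should flag and correct the sign in the statement rather than assert agreement with it. Your two caveats---verifying the worst-case constant in $R \le SL$ and noting the bound is vacuous for $x \le npL$---are both apt and worth keeping.
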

\begin{proof}
See Section \ref{proof:decoding_costs}.
\end{proof}
\noindent Theorem \ref{thm:decoding_costs} provides a useful insight about the performance of local product codes: the probability of reading more than $x$ blocks during decoding decays decays to zero at a super-exponential rate.
Note that for the special (and more practical) case of $L_A = L_B = L$, the number of blocks read per straggler is exactly $L$ and thus $\E[R] = \E[SL] = npL$. Thus, using Theorem \ref{thm:decoding_costs}, we can obtain the following corollary.
\begin{corollary} 
For any $\epsilon>0$ and $L = L_A = L_B$, the probability that the decoding worker reads $\epsilon L$ more blocks than the expected $\E[R]$ blocks is upper bounded by
\begin{equation*}
\Pr(R \geq \E[R] + \epsilon L) \leq \left(1 + \dfrac{\epsilon}{np}\right)^{-np-\epsilon}e^{-\epsilon}.
\end{equation*}
For $\epsilon=np$, this becomes
\begin{equation*}
\Pr(R \geq 2\E[R]) \leq \frac{1}{(4e)^{np}}.
\end{equation*}
\end{corollary}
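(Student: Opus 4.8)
The plan is to derive the corollary as a direct specialization of Theorem \ref{thm:decoding_costs} to the symmetric regime $L_A = L_B = L$, where the per-straggler read cost becomes deterministic. First I would record the consequence of this symmetry, which the preceding discussion already establishes: with $L_A=L_B=L$, each straggler costs exactly $L$ block reads, so $R = SL$ where $S \sim \text{Binomial}(n,p)$ counts the stragglers among the $n=(L+1)^2$ blocks, giving $\E[R] = L\,\E[S] = npL$. This pins down the $\E[R]$ appearing in the statement and converts the additive deviation $\E[R] + \epsilon L$ into a concrete threshold on $x$.

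Next I would substitute $x = \E[R] + \epsilon L = (np + \epsilon)L$ into the bound of Theorem \ref{thm:decoding_costs} and simplify its two ingredients. The ratio collapses to $\frac{x}{npL} = \frac{np+\epsilon}{np} = 1 + \frac{\epsilon}{np}$, the exponent of the power term becomes $\frac{x}{L} = np+\epsilon$, and the argument of the exponential becomes $-\frac{x}{L} + np = -\epsilon$. Collecting these pieces yields exactly
$$\Pr(R \geq \E[R] + \epsilon L) \leq \left(1 + \frac{\epsilon}{np}\right)^{-(np+\epsilon)} e^{-\epsilon},$$
which is the first claimed inequality.

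Finally, for the special case I would set $\epsilon = np$. Then $1 + \frac{\epsilon}{np} = 2$ and the power-term exponent is $-(np+\epsilon) = -2np$, so that factor equals $2^{-2np} = 4^{-np}$, while $e^{-\epsilon} = e^{-np}$; their product is $4^{-np} e^{-np} = (4e)^{-np} = \tfrac{1}{(4e)^{np}}$, as required.

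Since this argument is purely a substitution followed by elementary algebra, I do not expect a genuine obstacle; all of the probabilistic work is already carried out in Theorem \ref{thm:decoding_costs}. The only point deserving care is the identification $\E[R] = npL$, which relies on the read cost being \emph{exactly} $L$ per straggler when $L_A = L_B$ (so that $R = SL$ rather than merely $R \le SL$); this is what makes the threshold $\E[R] + \epsilon L$ match the value $x = (np+\epsilon)L$ plugged into the theorem. With that in hand, the rest is bookkeeping.
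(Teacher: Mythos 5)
Your proposal is correct and matches the paper's (implicit) derivation exactly: the paper likewise notes that for $L_A = L_B = L$ each straggler costs exactly $L$ reads so that $\E[R] = npL$, and then obtains the corollary by substituting $x = \E[R] + \epsilon L = (np+\epsilon)L$ into Theorem~\ref{thm:decoding_costs}. Your algebraic simplifications and the specialization $\epsilon = np$ are all accurate.
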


In Fig. \ref{fig:decode_costs}, we plot the upper bound on $\Pr(R \geq x)$ for different values of $x$. The values of $n$ and $L$ were chosen to be consistent with the experiments in Fig. \ref{fig:mat_mul_stats}, where $L_A = L_B = 10$, so that the maximum number of blocks read per straggler is $L = 10$ and the number of blocks of $\C_\text{coded}$ per decoding worker is $n =121$. Additionally, we used $p = .02$ as obtained through extensive experiments on AWS Lambda (see Fig. \ref{fig:stragglers}). In a polynomial code with the same locality, $100$ blocks would be read to mitigate any straggler by a decoding worker. For the local product code, the probability that $100$ blocks are read is upper bounded by $\Pr(R \geq 100) \leq 3.5 \times 10^{-10}$.

\begin{figure}[t]
    \centering
    \includegraphics[width=0.8\linewidth]{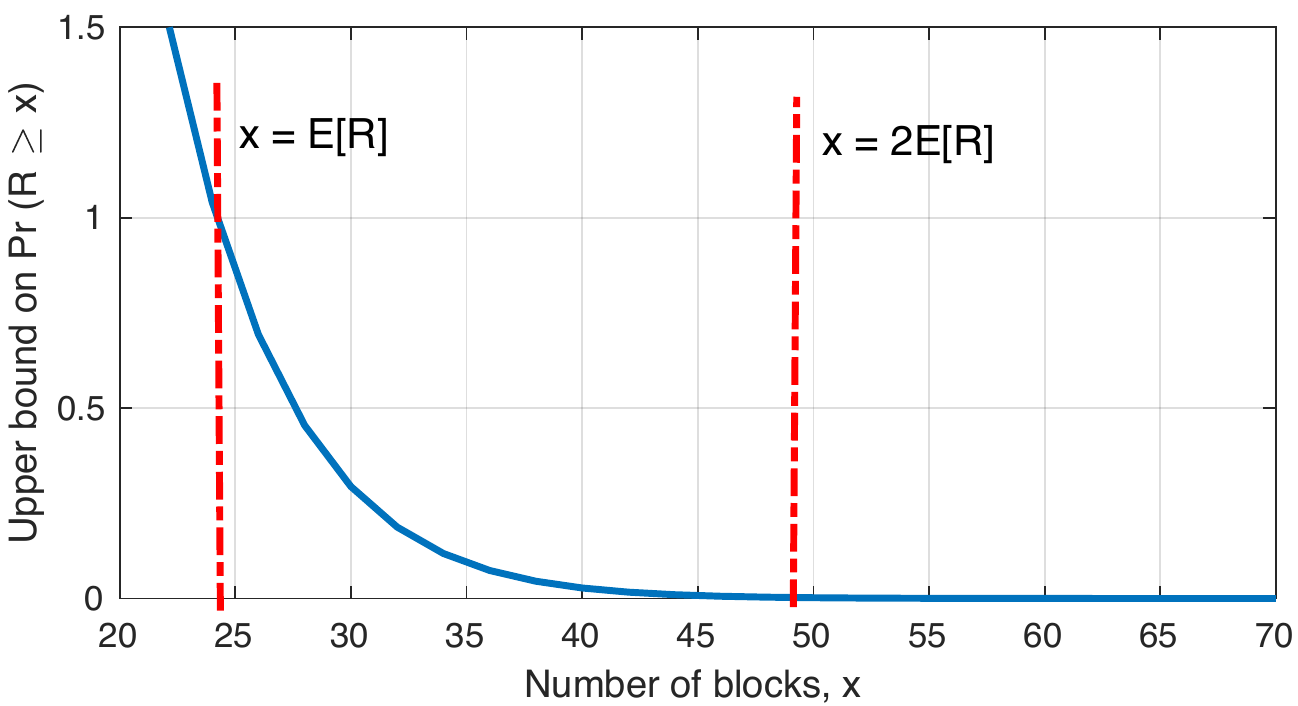}
    \caption{\small Probabilistic upper bound on the number of blocks read, $R$, by a decoding worker from Theorem \ref{thm:decoding_costs} shown for $L = 10$, $n=121$, and $p=0.02$. Here, $\Pr(R\geq 2\E[R]) \leq 3.1\times 10^{-3}.$}
\label{fig:decode_costs}
\end{figure} 

\subsection{Straggler Resiliency of Local Product Codes}\label{sec:straggler_resiliency}

\begin{figure}[t]
    \centering
    \includegraphics[width=0.9\linewidth]{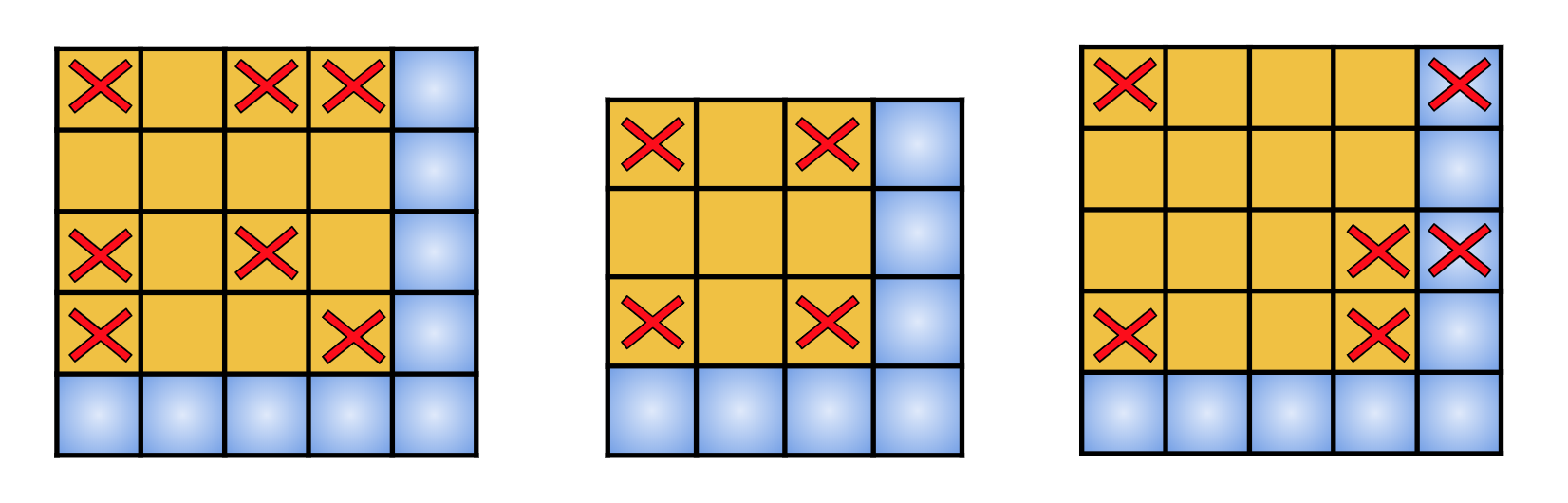}
    \caption{\small Some examples of undecodable sets, as viewed from a single decoding worker's $(L_A + 1)\times (L_B + 1)$ grid. The yellow blocks correspond to the systematic part of the code, and blue blocks to the parity. Blocks marked with an "X" are stragglers.}
\label{fig:trap_sets}
\end{figure} 

To characterize the straggler resiliency of local product codes, we turn our focus to finding the probability of encountering an \textit{undecodable set}: a configuration of stragglers that cannot be decoded until more results arrive. 

\begin{definition}
\textbf{Undecodable set}: Consider a single decoding worker that is working on $n$ blocks, arranged in an $(L_A + 1) \times (L_B + 1)$ grid, and let $S$ be the number of missing workers. The decoding worker's blocks are said to form an $S$-undecodable set if we need to wait for more workers to arrive to decode all the $S$ missing blocks.
\label{def:undec_set}
\end{definition}



Some examples of undecodable sets are shown in Fig. \ref{fig:trap_sets}.
In an $S$-undecodable set, it is possible that some of the $S$ stragglers are decodable, but there will always be some stragglers that are preventing each other from being decoded. For the local product code, an individual straggler is undecodable if and only if there is at least one other straggler in both its row and column, because the code provides a single redundant block along each axis that can be used for recovery. This implies that a decoding worker must encounter at least three stragglers for one of them to be undecodable. However, the code can always recover any three stragglers through the use of a peeling decoder \cite{kangwook2,tavor}. While the three stragglers may share a column or row and be in an "interlocking" configuration, such as those shown in Fig. \ref{fig:three_stragglers}, two of the three can always be recovered, or "peeled off". Using these blocks, the straggler that was originally undecodable can be recovered. This provides a key result: all undecodable sets consist of four or more stragglers. Equivalently, given $S \leq 3$, the probability of being unable to decode is zero. This can also be noted directly from the fact the the minimum distance of a product code with one parity row and column is four, and hence, it can tolerate any three stragglers \cite{kangwook2}.

\begin{figure}[t]
    \centering
    \includegraphics[width=0.95\linewidth]{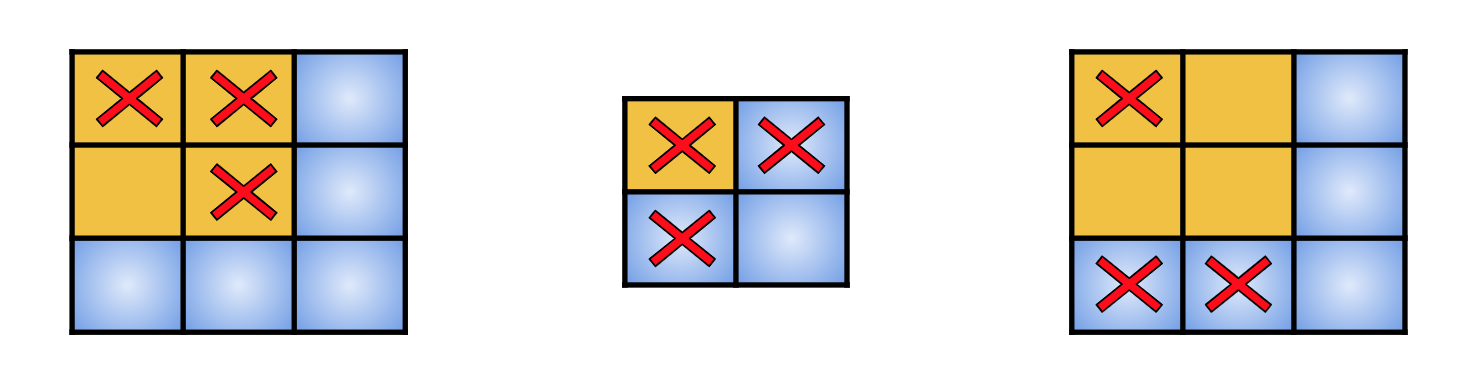}
    \caption{\small Some examples of "interlocking" three straggler configurations. Stragglers can be decoded using a peeling decoder.}
\label{fig:three_stragglers}
\end{figure} 

The following theorem bounds the probability of encountering an undecodable set for local product codes.
\begin{theorem}\label{thm:decodability_prob}
Let $p$ be the probability that a serverless worker straggles independently of others. Let $\bar D$ be the event that a decoding worker working on $n~ (\geq 8)$ blocks in an $(L_A + 1) \times (L_B + 1)$ grid cannot decode. Then, 
\begin{align*}
    \Pr(\bar D) \leq  \sum_{s=4}^{7} \alpha_s p^s (1-p)^{n-s} + \sum_{s=8}^n \binom{n}{s} p^s (1-p)^{n-s},
\end{align*}
where
\begin{align*}
    \alpha_4 &= \binom{L_A + 1}{2}\binom{L_B + 1}{2},~~~~ \alpha_5 = \alpha_4 (n-4),
\end{align*}
{\small
\begin{align*}
\alpha_6 &\leq  \binom{L_A + 1}{3}\binom{L_B + 1}{3}\binom{9}{6} + \alpha_4\binom{n - 4}{2}, \text{and} \\
    \alpha_7 &\leq \binom{L_A + 1}{3}\binom{L_B + 1}{3}\binom{9}{7} + \alpha_4 \binom{n - 4}{3}
\end{align*}
}
\end{theorem}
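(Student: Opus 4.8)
The plan is to decompose by the number of stragglers and reduce the theorem to a purely combinatorial count of undecodable configurations. Writing $S$ for the (random) number of straggling blocks, each specific placement of exactly $s$ stragglers among the $n$ blocks occurs with probability $p^s(1-p)^{n-s}$, so if $\alpha_s$ denotes an upper bound on the number of \emph{undecodable} placements with exactly $s$ stragglers, then $\Pr(\bar D)=\sum_{s=0}^n \Pr(\bar D\cap\{S=s\})\le \sum_{s=0}^n \alpha_s\, p^s(1-p)^{n-s}$. By the discussion preceding the theorem, any $s\le 3$ stragglers are always decodable, so $\alpha_s=0$ for $s\le 3$; and for $s\ge 8$ I would use the trivial bound $\alpha_s\le\binom{n}{s}$, which produces the second sum. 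All the work is therefore in bounding $\alpha_4,\dots,\alpha_7$.

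The key structural step is to characterize undecodable sets through the peeling decoder. I would model each straggler as an edge of the bipartite graph whose two vertex classes are the $L_A+1$ grid-rows and the $L_B+1$ grid-columns. The peeling decoder removes any straggler that is the unique missing block in its row or in its column; the residual ``stopping set'' is precisely a subgraph in which every occupied row and every occupied column carries at least two stragglers, i.e.\ a subgraph with no degree-one vertex, which is nonempty exactly when the straggler edge-set contains a cycle. Since the graph is bipartite, the shortest cycles have length four, corresponding to a $2\times2$ all-straggler block (two rows, two columns), and length six, corresponding to a $6$-cycle supported on a $3\times3$ subgrid. Thus a configuration is undecodable iff its stragglers contain such a cycle, and the minimal undecodable cores are the $2\times2$ block ($4$ stragglers) and, among configurations with no $2\times2$ block, the $3\times3$ six-cycle ($6$ stragglers).

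With this in hand the counting is systematic. For $\alpha_4$, an undecodable $4$-set must be a single $4$-cycle, i.e.\ a $2\times2$ block, and choosing two of the $L_A+1$ rows and two of the $L_B+1$ columns gives $\alpha_4=\binom{L_A+1}{2}\binom{L_B+1}{2}$. For $\alpha_5$ I would first check that a bipartite subgraph on five edges with no degree-one vertex necessarily contains a $4$-cycle, so every undecodable $5$-set contains a unique $2\times2$ block; bounding by the choice of block times the free placement of the fifth straggler among the other $n-4$ blocks yields $\alpha_5=\alpha_4(n-4)$. For $s=6$ and $s=7$ I split the undecodable $s$-sets into those that do and do not contain a $2\times2$ block. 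The former are over-counted by $\alpha_4\binom{n-4}{s-4}$ (pick the block, then freely place the remaining $s-4$ stragglers), giving the $\alpha_4\binom{n-4}{2}$ and $\alpha_4\binom{n-4}{3}$ terms. The latter contain a six-cycle and hence are supported on a $3\times3$ subgrid, over-counted by $\binom{L_A+1}{3}\binom{L_B+1}{3}\binom{9}{s}$, producing the $\binom{9}{6}$ and $\binom{9}{7}$ terms and completing the stated bounds.

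The main obstacle is the structural confinement in the ``no $2\times2$ block'' case, and the genuinely delicate instance is $\alpha_7$. For $s\le 6$ every coreless undecodable configuration \emph{is} its own stopping set, and the six-cycle forces the whole support into a single $3\times3$ subgrid; but a coreless undecodable $7$-set may have only a six-cycle as its stopping set, so a seventh straggler peels off harmlessly and could, a priori, sit outside that $3\times3$ subgrid. The crux is therefore to argue that all such $7$-configurations are nonetheless absorbed into the over-count $\binom{L_A+1}{3}\binom{L_B+1}{3}\binom{9}{7}$, i.e.\ that the stray straggler does not generate configurations escaping a $3\times3$ support. Pinning down this case, together with verifying that the generous over-counting in the $2\times2$-block terms misses no cycle-bearing configuration, is where the real care is needed; the remaining bookkeeping, including the crude $s\ge 8$ bound, is routine.
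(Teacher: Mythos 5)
Your proposal is structurally the same as the paper's proof: the same conditioning on the straggler count $S$ (each placement of $s$ stragglers having probability $p^s(1-p)^{n-s}$), the same exact counts for $\alpha_4$ and $\alpha_5$, the same two-case split for $\alpha_6$ and $\alpha_7$ (configurations confined to a $3\times 3$ subgrid versus configurations containing a $2\times 2$ block), and the same trivial bound for $s\geq 8$. Your bipartite-graph formalization --- stragglers as edges between grid-rows and grid-columns, peeling as removal of degree-one vertices, undecodability equivalent to the edge set containing a cycle --- is a cleaner rendering of what the paper argues in words, and it is precisely what lets you see the issue discussed next.

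The difficulty you flag at $\alpha_7$ is not a detail to be ``pinned down'': it is a genuine hole, the absorption you hope for is false, and the paper's own proof commits exactly this error by asserting the two-case classification for $s=6,7$ without justification. Concretely, take a $6$-cycle supported on rows $\{r_1,r_2,r_3\}$ and columns $\{c_1,c_2,c_3\}$ and place the seventh straggler anywhere outside that $3\times 3$ subgrid (possible whenever $L_A+1\geq 4$ or $L_B+1\geq 4$). The seventh straggler peels off, leaving the $6$-cycle as a stopping set, so the configuration is $7$-undecodable; it contains no $2\times 2$ block, and it touches four rows or four columns, so it lies in no $3\times 3$ subgrid --- hence it is counted by \emph{neither} term of the stated bound. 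Since each $3\times3$ subgrid carries exactly $6$ six-cycles, there are exactly $6\binom{L_A+1}{3}\binom{L_B+1}{3}(n-9)$ such configurations, and this loss is not compensated by the over-counting in the two stated terms. For the paper's own experimental parameters $L_A=L_B=10$, $n=121$, the missed configurations number $6\cdot\binom{11}{3}^2\cdot 112 = 18{,}295{,}200$, whereas the slack is small: every $7$-subset of a $3\times3$ subgrid already contains a $2\times2$ block (two vacancies can destroy at most $8$ of the $9$ blocks), so the first term is pure double-count of size $980{,}100$, and truncating inclusion--exclusion at pairs shows the second term over-counts by at most $6{,}751{,}800$; the true $\alpha_7$ therefore exceeds the claimed bound by more than $10^7$. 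The fix is fortunately easy and is delivered by your own cycle characterization: a $7$-undecodable set with no $2\times2$ block is \emph{exactly} a $6$-cycle plus one stray straggler outside its subgrid (a seventh edge inside the subgrid would create a $4$-cycle), so the correct statement is
\begin{equation*}
\alpha_7 \;\leq\; \alpha_4\binom{n-4}{3} \;+\; 6\binom{L_A+1}{3}\binom{L_B+1}{3}(n-9),
\end{equation*}
and the analogous repair for the theorem costs only an order-$p^7$ change in $\Pr(\bar D)$, leaving the paper's qualitative conclusions intact. So your instinct about where the real care is needed was exactly right; the resolution is not to absorb these configurations but to count them separately.
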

\begin{proof}
See Section \ref{proof:decodability_prob}.
\end{proof}
In Fig. \ref{fig:undecodability_bound}, the bound in Theorem \ref{thm:decodability_prob} is shown with $p=0.02$ for $L = L_A = L_B = 1, 2, ..., 25$ so that the total number of blocks per worker is $(L + 1)^2$. This shows a "sweet spot" around 121 blocks per decoding worker, or $L=10$, the same choice used in the experiments shown in Fig. \ref{fig:mat_mul_stats}. With this choice of code parameters, the probability of a decoding worker being able to decode all the stragglers is high. This simultaneously enables low encoding and decoding costs, avoids doing too much redundant computation during the multiplication stage (only $21\%$), and gives a high probability of avoiding an undecodable set in the decoding stage. In particular, for $L_A = L_B = 10$, an individual worker is able to decode with probability at least $99.64\%$ when $p=0.02$.

\begin{figure}[t]
    \centering
    \includegraphics[width=.85\linewidth]{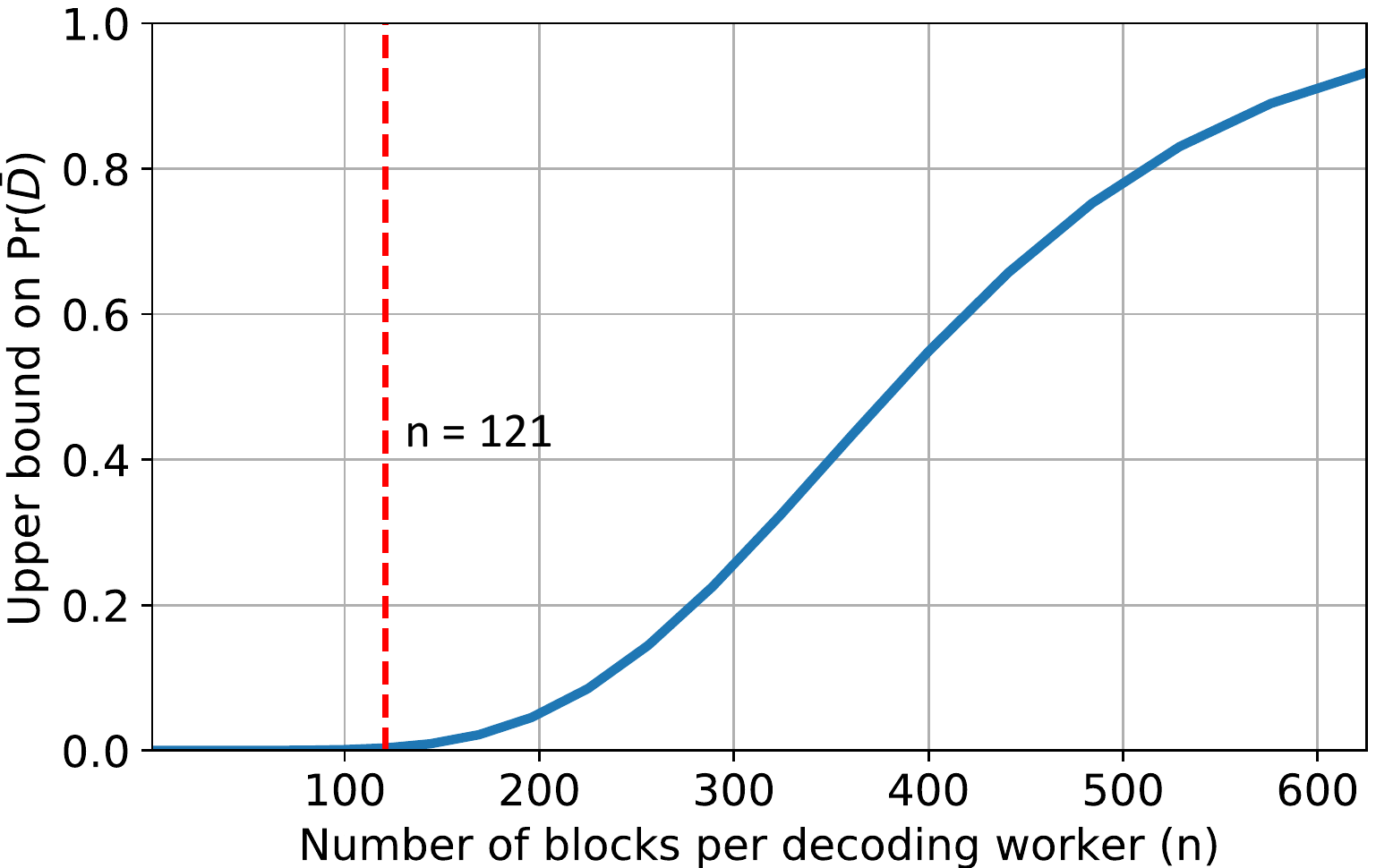}
    \caption{\small Upper bound on probability of the event $\bar D$ (that is, a decoding worker being unable to decode) when $p=.02$. We chose $n=121$ in our experiments which represents a good trade-off between code redundancy and straggler resiliency.}
\label{fig:undecodability_bound}
\end{figure} 

\begin{remark}
The analysis in Sections \ref{sec:decoding_costs} and \ref{sec:straggler_resiliency} derives bounds for one decoding worker. In general, for decoding using $k$ workers in parallel, the respective upper bounds on probabilities in Theorem \ref{thm:decoding_costs} (any decoding worker reading more than $x$ blocks) and Theorem \ref{thm:decodability_prob} (any decoding worker not able to decode) can be multiplied by $k$ using the union bound. 
\end{remark}

\section{Coded Computing in Applications}
In this section, we take several high-level applications from the field of machine learning and high performance computing, and implement them on the serverless platform AWS Lambda. Our experiments clearly demonstrate the advantages of proposed coding schemes over speculative execution.

\subsection{Kernel Ridge Regression}

 \begin{figure*}[t!]
    \centering
    \begin{subfigure}[t]{0.45\textwidth}
        \centering
        \includegraphics[scale=0.45]{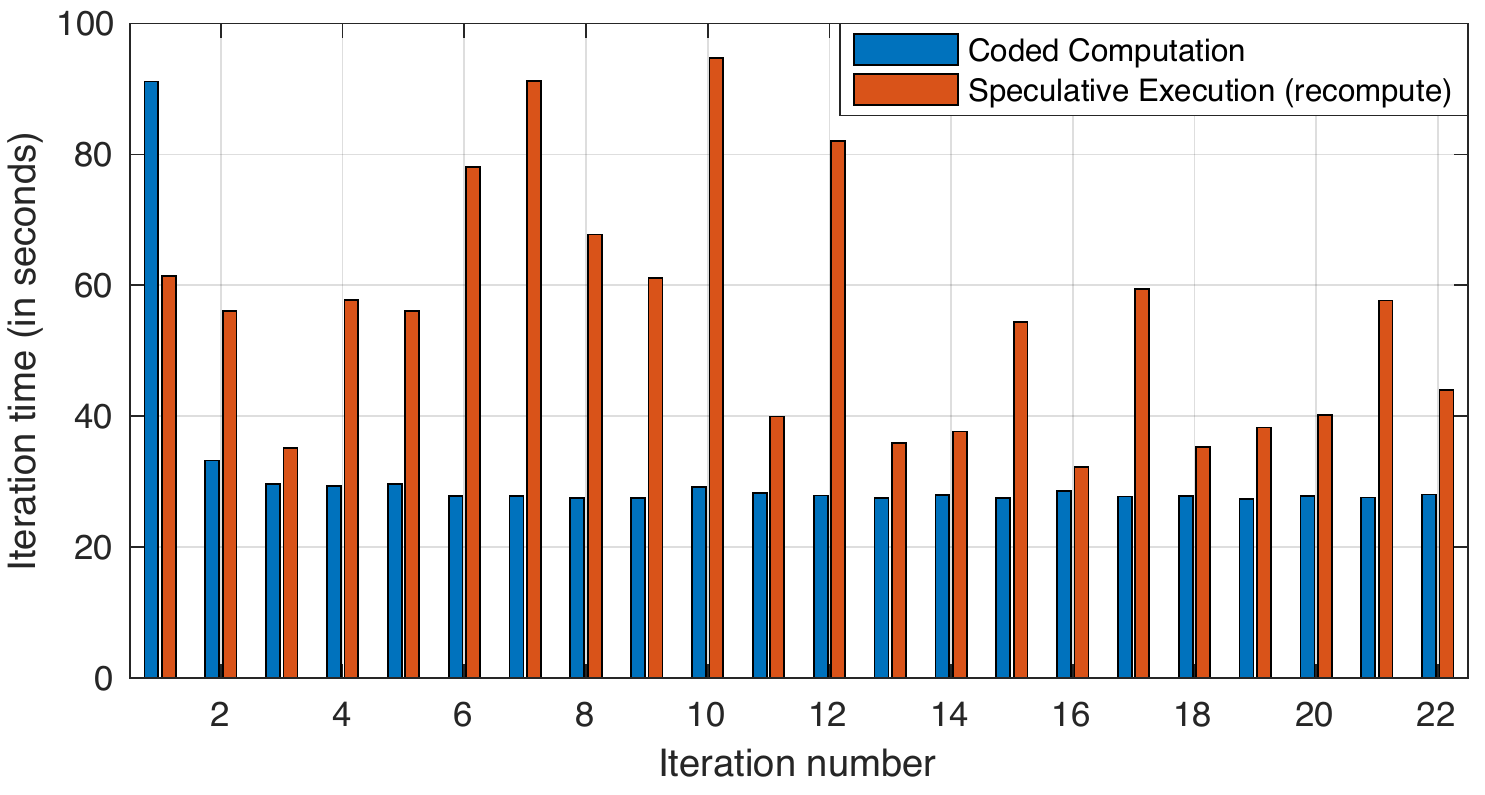}
        \caption{Per iteration time during PCG for ADULT dataset.}
    \end{subfigure}
    ~
    \begin{subfigure}[t]{0.45\textwidth}
        \centering
        \includegraphics[scale=0.48]{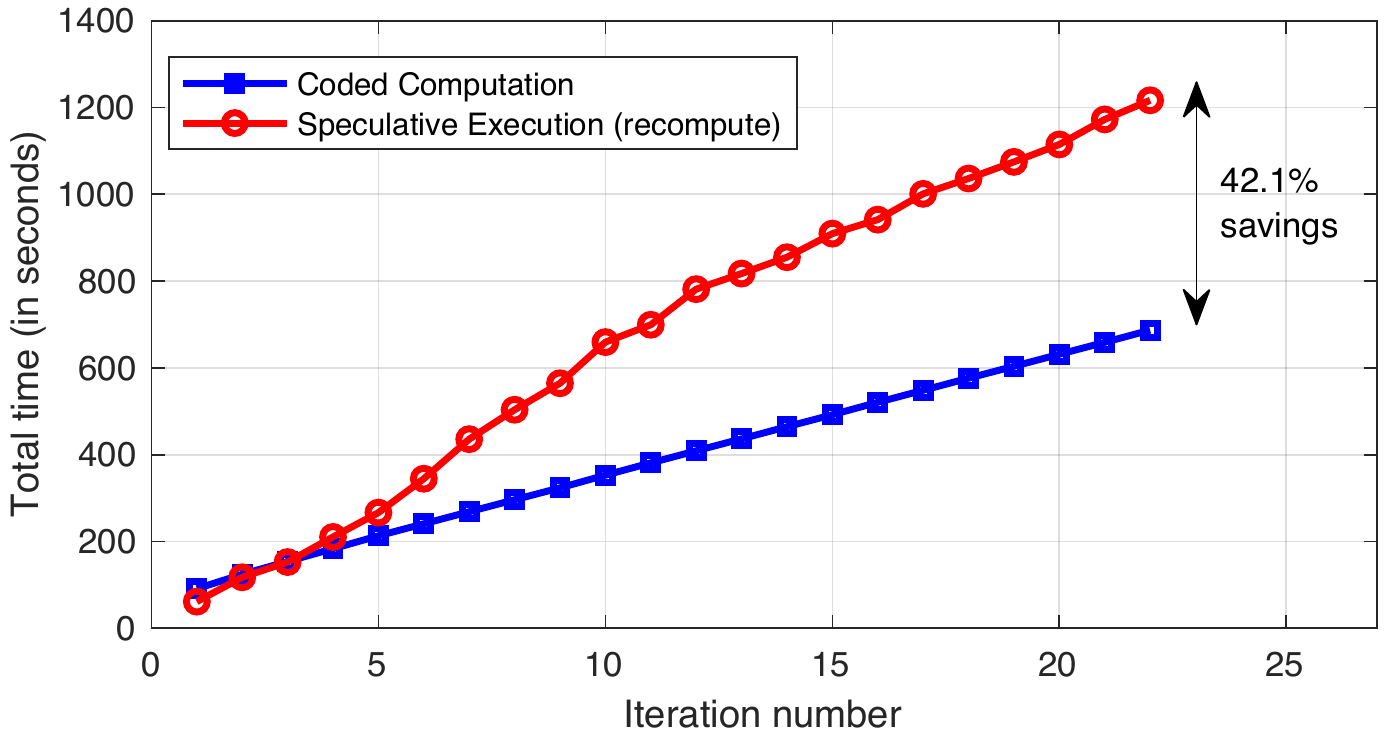}
        \caption{Total running time for PCG for ADULT dataset.}
    \end{subfigure}
    \caption{\small Coded computing versus speculative execution for KRR with PCG on the ADULT dataset. 
    Error on testing dataset was $11\%$.}
    \label{fig:krr_stats_adult}
\end{figure*} 

\begin{figure*}[t!]
    \centering
    \begin{subfigure}[t]{0.45\textwidth}
        \centering
        \includegraphics[scale=0.45]{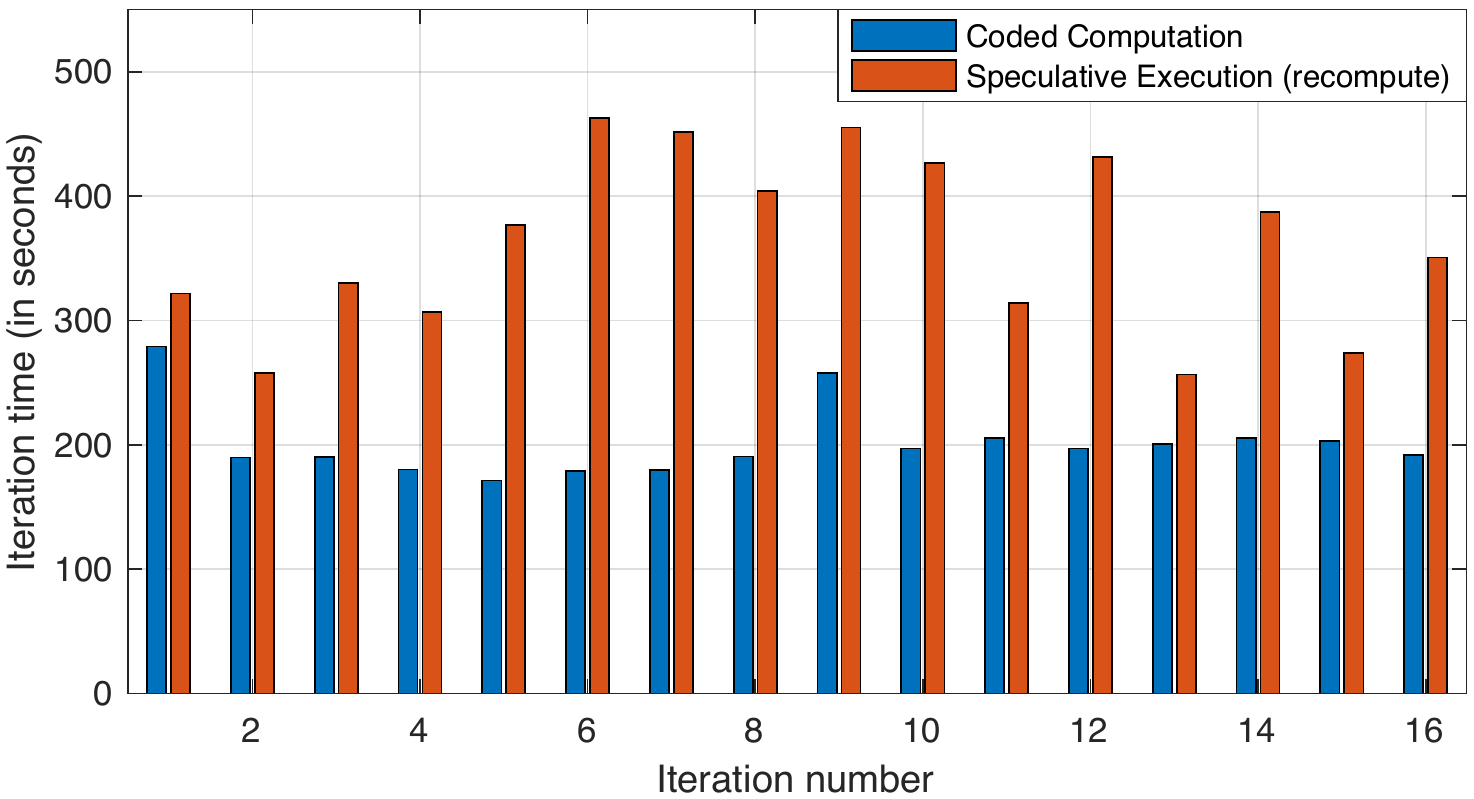}
        \caption{Per iteration time during PCG for EPSILON dataset}
    \end{subfigure}
    ~
    \begin{subfigure}[t]{0.45\textwidth}
        \centering
        \includegraphics[scale=0.47]{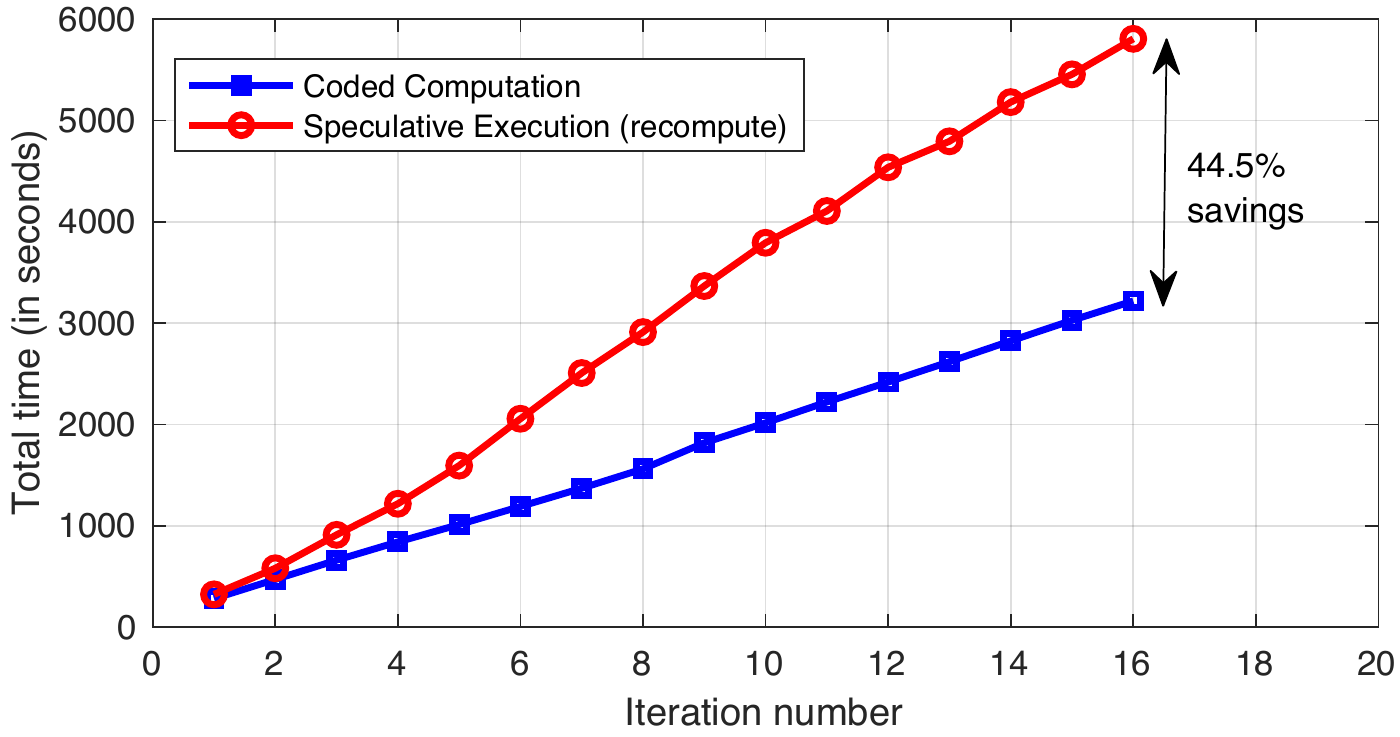}
        \caption{Total running time for PCG for EPSILON dataset}
    \end{subfigure}
    \caption{\small Coded computing versus speculative execution for KRR with PCG on the EPSILON dataset. 
    Error on testing dataset was $8\%$.   }
\label{fig:krr_stats_epsilon}
\end{figure*} 

We first focus on the flexible class of Kernel Ridge Regression (KRR) problems with Preconditioned Conjugate Gradient (PCG). Oftentimes, KRR problems are ill-conditioned, so we use a preconditioner described in \cite{woodruff_pcg} for faster convergence. The problem can be described as
\begin{equation}\label{normal_equation}
(\K + \lambda\I_n)\x = \y,
\end{equation}
where $\K\in \R^{n\times n}$ is a Kernel matrix defined by $\K_{ij} = k(\x_i,\x_j)$ with the kernel function $k: \mathcal{X}\times \mathcal{X}\rightarrow \R$ on the input domain $\mathcal{X} \subseteq \R^d$, $n$ is the number of samples in training data, $\y \in\R^{n\times 1}$ is the labels vector and the solution to coefficient vector $\x$ is desired. A preconditioning matrix $\M$  based on random feature maps \cite{rahimi} can be introduced for faster convergence, so that the KRR problem in Eq. \eqref{normal_equation} can be solved using Algorithm \ref{algo1}. Incorporation of such maps has emerged as a powerful technique for speeding up and scaling kernel-based computations, often requiring fewer than 20 iterations of Algorithm \ref{algo1}   to solve \eqref{normal_equation} with good accuracy.

\begin{algorithm}[t]
\caption{Fast Kernel Ridge Regression using preconditioned conjugate gradient}
 \label{algo1}
\SetAlgoLined
\textbf{Input Data} (stored in S3):
Kernel Matrix $\K\in \R^{n\times n}$ and vector $\y\in \R^{n\times 1}$, regularization parameter $\lambda$,
inverse of the preconditioner $\M\in \R^{n\times n}$ found using the random feature map from \cite{rahimi}\\
\textbf{Initialization}: Define $\x_0 = \mathbf{1}^{n\times1},~~\rr_0 = \y - (\K + \lambda\I_n)\x_0,~~\z_0 = \M^{-1}\rr_0, ~~\p_0 = \z_0$\\
 \While{$\|(\K+\lambda\I_n)\x_k - \y||> 10^{-3}||\y||$}{
 $\h_k = (\K + \lambda\I_n)\p_k$ \tcp*{Computed in parallel using codes}
$\alpha_k = \frac{\rr_k^T\z_k}{\p_k^T\h_k},~~ \x_{k+1} = \x_k + \alpha_k\p_k, ~~\rr_{k+1} = \rr_k - \alpha_k\h_k$\\
$\z_{k+1} = \M^{-1}\rr_{k+1}$ \tcp*{Computed in parallel using codes}
$\beta_k = \frac{\rr_{k+1}^T\z_{k+1}}{\rr_k^T\z_k}$, ~~$\p_{k+1} = \z_{k+1} + \beta_k\p_k$
}
\KwResult{$\x^* = \x_{k+1}$ where $(\K + \lambda\I_n)\x^* = y$ }
\end{algorithm}

\textbf{Straggler mitigation with coding theory}: 
The matrix-vector multiplication in Steps 4 and 6 are the bottleneck in each iteration and are  distributedly executed on AWS Lambda. As such, they are prone to slowdowns due to faults or stragglers, and should be the target for the introduction of coded computation.   To demonstrate the promised gains of the coding theory based approach, we conducted an experiment on the standard classification datasets ADULT and EPSILON \cite{libsvm} with Gaussian kernel $k(\x,\z) = \exp(-||\x- \z||_2^2/2\sigma^2)$ with $\sigma = 8$ and $\lambda = 0.01$, and the Kernel matrices are square of dimension $32,000$ and $400,000$, respectively. We store the training and all subsequently generated data in cloud storage S3 and use Pywren \cite{pywren} as a serverless computing framework on AWS Lambda. 

For this experiment, we implemented a 2D product code similar to that proposed in \cite{tavor} to encode the row-blocks of $(\K + \lambda\I_n)$ and $\M^{-1}$, and distributed them among 64 and 400 Lambda workers, respectively. To compare this coded scheme's performance against speculative execution, we distribute the uncoded row-blocks of $(\K + \lambda\I_n)$ and $\M^{-1}$ among the same number of Lambda workers, and wait for $90\%$ of jobs to finish and restart the rest without terminating unfinished jobs.
Any job that finishes first would submit its results.   The computation times for KRR with PCG on these datasets for the coding-based and speculative execution-based schemes is plotted in  Figs. \ref{fig:krr_stats_adult} and \ref{fig:krr_stats_epsilon}. For coded computation, the first iteration also includes the encoding time. We note that coded computation performs significantly better than speculative execution, with $42.1\%$ and $44.5\%$ reduction in total job times for ADULT and EPSILON datasets, respectively. This experiment again demonstrates that coding-based schemes can significantly improve the efficiency of large-scale distributed computations.
Other regression problems such as ridge regression, lasso, elastic net and support vector machines can be modified to incorporate codes in a similar fashion.

\subsection{Alternating Least Squares}

Alternating Least Squares (ALS) is a widely popular method to find low rank matrices that best fit the given data. This empirically successful approach is commonly employed in applications such as matrix completion and matrix sensing used to build recommender systems \cite{jain2013low}.
For example, it was a major component of the winning entry in the Netflix Challenge where the objective was to predict user ratings from already available datasets \cite{netflix}. 
We implement the ALS algorithm for matrix completion on AWS Lambda using the Pywren framework \cite{pywren}, where the main computational bottleneck is a large matrix-matrix multiplication in each iteration. 

Let $\mathbf R \in \R^{u\times i}$ be a matrix constructed based on the existing (incomplete) ratings, where $u$ and $i$ are the number of users giving ratings and items being rated, respectively. The objective is to find the matrix $\mathbf{\tilde{R}}$ which predicts the missing ratings. One solution is to compute a low-rank factorization based on the existing data, which decomposes the ratings matrix as $\mathbf{\tilde{R}} = \mathbf{H}\mathbf{W},$ where $\mathbf{H} \in \mathbb{R}^{u \times f}, \mathbf{W} \in \mathbb{R}^{f \times i}$ for some number of latent factors $f$, which is a hyperparameter. 

Let us call the matrices $\mathbf{H}$ and $\mathbf{W}$ the \textit{user matrix} and \textit{item matrix}, respectively. Each row of $\mathbf{H}$ and column of $\mathbf{W}$ uses an $f$-dimensional vector of latent factors to describe each user or item, respectively. This gives us a rank-$f$ approximation to $\mathbf{R}$. 
To obtain the user and item matrices, we solve the optimization problem
$\argmin_{\mathbf{H}, \mathbf{W}}  F(\mathbf{H}, \mathbf{W}),$ where the loss $F(\mathbf{H}, \mathbf{W})$ is defined as
\begin{align*}
F(\mathbf{H}, \mathbf{W}) = ||\mathbf{R} - \mathbf{\tilde{R}}||_F^2 + \lambda (||\mathbf{H}||_F^2 + ||\mathbf{W}||_F^2),
\end{align*}
where $\lambda > 0$ is a regularization hyperparameter chosen to avoid overfitting. 
The above problem is non-convex in general. However, it is \textit{bi-convex}\textemdash given a fixed $\mathbf{H}$, it is convex in $\mathbf{W}$, and given a fixed $\mathbf{W}$, it is convex in $\mathbf{H}$. ALS, described in Algorithm \ref{alg:als}, exploits this bi-convexity to solve the problem using coordinate descent. ALS begins with a random initialization of the user and item matrices. It then alternates between a user step, where it optimizes over the user matrix using the current item matrix estimate, and an item step, optimizing over the item matrix using the newly obtained user matrix. Thus, the updates to the user and item matrices in the $k$-th iteration are given by
\begin{align*}
\mathbf{H}_k &= \argmin_\mathbf{H} F(\mathbf{H}, \mathbf{W}_{k-1}) \\
&= \mathbf{R}\mathbf{W}_{k-1}^T(\mathbf{W}_{k-1}\mathbf{W}_{k-1}^T + \lambda \mathbf{I}_f)^{-1}; \\
\mathbf{W}_k &= \argmin_\mathbf{W} F(\mathbf{H}_k, \mathbf{W}) = (\mathbf{H}_{k}^T\mathbf{H}_{k} + \lambda \mathbf{I}_f)^{-1} \mathbf{H}_k^T\mathbf{R}.
\end{align*}

\begin{algorithm}[t]
\SetAlgoLined
 \caption{Alternating Least Squares (ALS)}
 \label{alg:als}
\textbf{Input Data} (stored in S3):
Ratings Matrix $\mathbf{R} \in \R^{u\times i}$, regularization parameter $\lambda$,
latent factor dimension $f$, desired accuracy $\epsilon$\\
\textbf{Initialization}: Define $\mathbf{H}_0 \in \mathbb{R}^{u \times f}$, $\mathbf{W}_0 \in \mathbb{R}^{f \times i}$ with entries drawn independently from a Uniform$[0, 1/f]$ distribution. \\
 \While{$||\mathbf{R} - \mathbf{H}_k\mathbf{W}_k||_F^2 > \epsilon$}{
 User step: $\mathbf{H}_k = \mathbf{R}\mathbf{W}_{k-1}^T(\mathbf{W}_{k-1}\mathbf{W}_{k-1}^T + \lambda \mathbf{I}_f)^{-1}$ \tcp*{Done in parallel using codes} 
Item step: $\mathbf{W}_k = (\mathbf{H}_{k}^T\mathbf{H}_{k} + \lambda \mathbf{I}_f)^{-1} \mathbf{H}_k^T\mathbf{R}$ ~~~~\tcp*{Done in parallel using codes}
}
\KwResult{$\mathbf{H}^* = \mathbf{H}_{k}$, $\mathbf{W}^* = \mathbf{W}_{k}$}
\end{algorithm}
In practice, $u, i \gg f$, so computing and inverting the $f\times f$ matrix in each step can be done locally at the master node. Instead, the matrix multiplications $\mathbf{R}\mathbf{W}_{k-1}^T$ and $\mathbf{R}^T \mathbf{H}_k$ in the user and item steps, respectively, are the bottleneck in each iteration, requiring $\mathcal{O}(uif)$ time. To mitigate stragglers, we use local product codes and speculative execution and compare their runtimes in Fig. \ref{fig:als} for seven iterations. The matrix $\mathbf{R}$ was synthetically generated with $u = i = 102400$ and the number of latent factors used was $f = 20480$. Each rating was generated independently by sampling a Uniform$\{1,2,3,4,5\}$ random variable, intended to be the true user rating. Then, noise generated by sampling a $\mathcal{N}(0, .2)$ distribution was added, and the final rating was obtained by rounding to the nearest integer. 
The ratings matrix $\mathbf{R}$ is encoded once before the computation starts, and thus the encoding cost is amortized over iterations.
We used $500$ workers during the computation phase and $5$ workers during the decoding phase for each matrix multiplication.
It can be seen that codes perform $20\%$ better than speculative execution while providing reliability, that is, each iteration takes on average $\sim$$150$ seconds with much smaller variance in running times per iteration.

\begin{figure}[t!]
    \centering
    \begin{subfigure}[t]{0.45\textwidth}
        \centering
        \includegraphics[scale=0.61]{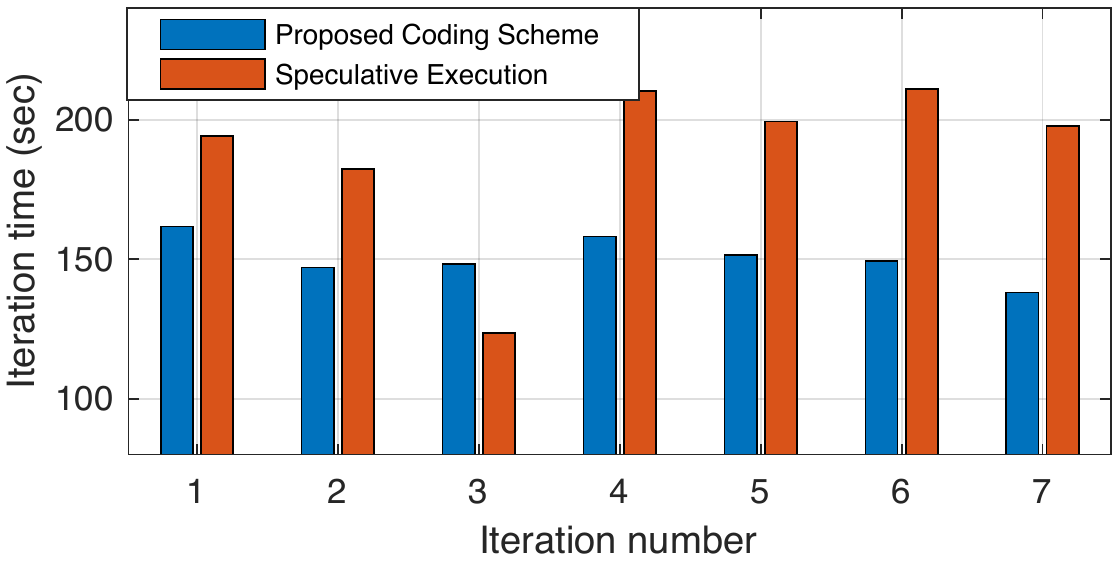}
        \caption{Per iteration time for ALS.}
    \end{subfigure}
    \begin{subfigure}[t]{0.45\textwidth}
        \centering
        \includegraphics[scale=0.6]{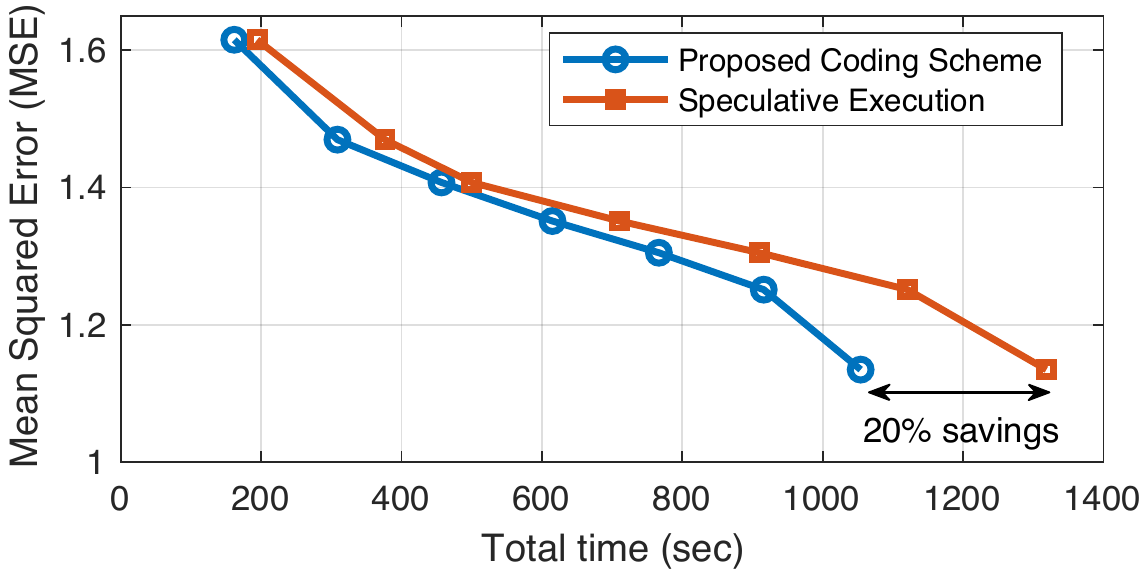}
        \caption{Total running time versus mean squared error for ALS.}
    \end{subfigure}
    \caption{\small Comparison of proposed coding scheme, that is, local product codes, versus speculative execution for straggler mitigation on AWS Lambda. 
    }
    \label{fig:als}
\end{figure} 

\subsection{Tall-Skinny SVD}

Singular Value Decomposition (SVD) is a common numerical linear algebra technique with numerous applications, such as in the fields of  
image processing \cite{svd_ip}, genomic signal processing \cite{svd_genome}, unsupervised learning \cite{pca_svd_ml_app}, and more.  
In this section, we employ our proposed coding scheme in mitigating stragglers while computing the SVD of a tall, skinny matrix $\A\in \R^{m\times p}$, where $m \gg p$. That is, we would like to compute the orthogonal matrices $\mathbf U\in \R^{m\times p}$ and $\mathbf V \in \R^{p\times p}$ and the diagonal matrix $\mathbf\Sigma \in \R^{p\times p} $, where $\A = \mathbf{U\Sigma V}^T$. 

To this end, we first compute the matrix-matrix multiplication $\B = \A^T\A$ which is the main computational bottleneck and requires $\mathcal{O}(mp^2)$ time. Next, we compute the SVD of $\B$. Note that $\B\in \R^{p\times p}$ is a smaller matrix and its SVD $\B = \mathbf{V\Sigma}^2\mathbf{V}^T$ requires only $O(p^3)$ time and memory and can be computed locally at the master node in general. This will give us the matrix $\mathbf V$ and the diagonal matrix $\mathbf \Sigma$. Now, $\mathbf U$ can again be computed in parallel using the matrix-matrix multiplication $\mathbf U = \A\times (\mathbf{V\Sigma}^{-1})$ which requires $\mathcal{O}(mp^2)$ time.\looseness=-1

We compute the SVD of a tall matrix of size $300,000\times 30,000$ on AWS Lambda. For local product codes, we use $400$ systematic workers during computation with $21\%$ redundancy, and $20$ and $4$ workers for parallel encoding and decoding, respectively. For speculative execution, we employed $400$ workers for computing in the first phase and started the second phase (that is, recomputing the straggling nodes) as soon as $79\%$ of the workers from the first phase arrive. Averaged over 5 trials, coded computing took $270.9$ seconds compared to $368.75$ seconds required by speculative execution, thus providing a $26.5\%$ reduction in end-to-end latency.   

Though we do not implement it here, Cholesky decomposition is yet another application that uses matrix-matrix multiplication as an important constituent. It is frequently used in finding a numerical solution of partial differential equations \cite{cholesky_pde},
solving optimization problems using quasi-Newton methods \cite{cholesky_lbfgs}, 
Monte Carlo methods \cite{cholesky_mc}, 
Kalman filtering \cite{cholesky_kf}, etc. 
The main bottleneck in distributed Cholesky decomposition involves a sequence of large-scale outer products \cite{ballardcholesky, numpywren} and hence local product codes can be readily applied to mitigate stragglers.

\section{Proofs}
\subsection{Proof of Theorem \ref{thm:decoding_costs}}
\label{proof:decoding_costs}

To prove Theorem \ref{thm:decoding_costs}, we use a standard Chernoff bound argument. In particular, for any $t > 0$, we can upper bound the probability of reading at least $x$ blocks as

\begin{equation}\label{eq:chernoff}
\Pr(R \geq x) \leq e^{-tx} M_R(t),
\end{equation}

\noindent where  $M_R(t) := \E\left[e^{tR}\right]$ is the Moment Generating Function (MGF) of the random variable $R$. 

We know that the number of blocks read, $R \leq SL$ since we read $\leq L$ blocks every time we decode a straggler. Thus, we can bound $M_R(t)$, the MGF of $R$, in terms of the MGF of $S$, $M_S(\tau) = \E\left[e^{\tau S}\right]$, as
\begin{equation}\label{eq:mgfs}
    M_R(t) = \E\left[e^{tR}\right] \leq \E\left[e^{tL S}\right] = M_S(\tau)|_{\tau = tL}~~\forall~ t>0.
\end{equation}
Since we assume each worker straggles independently with probability $p$, the distribution of $S$ is Binomial$(n, p)$. Thus, its moment generating function is
$M_S(\tau) = \left(1 - p + pe^{\tau}\right)^n.$ Using Eq. \ref{eq:mgfs}, we have
$M_R(t) \leq \left(1 - p + pe^{tL}\right)^n.$
Using this inequality and the fact that $1-y \leq e^{-y}\ ~\forall~ y \in \mathbb{R}$ in the upper bound of Eq. \ref{eq:chernoff}, we get
\begin{equation}\label{eq:upper_bound}
    \Pr(R \geq x) \leq e^{-tx + np - np(\exp(t L))} ~~\forall~ t\geq 0.
\end{equation}
As a last step, we specialize by setting
$t = \dfrac{1}{L} \ln\left(\dfrac{x}{np L}\right),$ which is obtained by optimizing the RHS above with respect to $t$. Substitution into Eq. \ref{eq:upper_bound} gives the desired upper bound on $\Pr(R\geq x)$, proving Theorem \ref{thm:decoding_costs}.

\subsection{Proof of Theorem \ref{thm:decodability_prob}}\label{proof:decodability_prob}

We already discussed in Sec. \ref{sec:straggler_resiliency} that local product codes can decode any three stragglers.
Now, we turn our attention to the case of four or more stragglers. Regardless of how much redundancy is used---including the extreme case of $L_A = L_B = 1$ where every block is duplicated three times---there exist undecodable sets with four stragglers. An example is shown in the middle figure in Fig. \ref{fig:trap_sets}. All 4-undecodable sets come in squares, with every straggler blocking another two off (otherwise, one would be free and decodable, reducing to three stragglers which can always be handled by a peeling decoder). Using this observation, we can create any 4-undecodable set by picking the two rows (from our $L_A + 1$ choices) and two columns (from our $L_B + 1$ choices) to place the stragglers in, yielding exactly four spots. 
Let $\alpha_S$ be the number of undecodable sets with $S$ stragglers. Thus,
$$\alpha_4 = \binom{L_A + 1}{2}\binom{L_B + 1}{2}.$$

All 5-undecodable sets come in the form of 4-undecodable sets with a fifth straggler placed in any vacant spot on the grid. 
This gives us a method to count the number of 5-undecodable sets. First, choose the two rows and two columns that make up the embedded 4-undecodable set. Then, choose from any of the $n - 4$ vacant entries to place the fifth straggler, which gives 
$\alpha_5 = \binom{L_A + 1}{2}\binom{L_B + 1}{2}(n-4)$.

In the case of $S = 6, 7$, undecodable sets can be formed in one of two ways: confining all stragglers to three rows and three columns, or constructing a 4-undecodable set and then placing two (or three for $S=7$) more stragglers anywhere. We can count the former as
\begin{equation}\label{eq:term1}
    \binom{L_A + 1}{3}\binom{L_B + 1}{3}\binom{9}{S}
\end{equation}
for both $S=6$ and $S=7$ since choosing three rows and three columns yields nine blocks, of which we choose $S$. For the latter, we can first construct a 4-undecodable set by picking the two rows and two columns in which to place the stragglers, and then place the remaining $S-4$ anywhere else, giving a total of
\begin{equation}\label{eq:term2}
    \binom{L_A + 1}{2}\binom{L_B + 1}{2}\binom{n - 4}{S - 4}
\end{equation}
such undecodable sets. By summing Eqs. \ref{eq:term1} and \ref{eq:term2}, we obtain an upper bound on $\alpha_S$ for $S=6,7$. This is an upper bound, rather than the exact number of undecodable sets, due to the fact that all sets are counted, but several are overcounted. For example, any 6-undecodable set where all six stragglers are confined to a contiguous $2\times 3$ grid is counted by both terms.

In general, if there are $S$ stragglers, there are $\binom{n}{S}$ ways to arrange the stragglers. Given the number of stragglers $S$, all configurations are equally likely, and the probability of being unable to decode is the percentage of configurations that are undecodable sets. Since $\{\alpha_S\}_{S=4}^{7}$ is the number of $S$-undecodable sets, the probability of being unable to decode given $S (=4,5,6,7)$ stragglers is $\binom{n}{S}^{-1} \alpha_S$.

The probability of encountering eight or more stragglers is small for suitably chosen $L_A, L_B$, owing to the fact that the probability of encountering a straggler is small (for example, $p \approx .02$ for AWS Lambda). Accordingly, we have chosen to focus our analysis on determining $\alpha_S$ for $S \leq 7$. We can obtain an upper bound on the probability of being unable to decode by assuming all configurations where $S \geq 8$ are undecodable sets.
Let $\bar D$ denote the event that a decoding worker cannot decode. Then by the law of total probability,
\begin{align*}
    &\Pr(\bar D) = \sum_{s=0}^{n} \Pr(\bar D | S = s) \Pr(S = s) \\
    &\leq \sum_{s=4}^{7} \binom{n}{s}^{-1} \alpha_s \Pr(S = s) + \sum_{s=8}^{n}\Pr(\bar D | S = s) \Pr(S = s).
\end{align*}
Now using the inequality $\Pr(\bar D | S = s) \leq 1~\forall ~s\geq 8$ and $\Pr(S = s) = \binom{n}{s} p^s (1-p)^{n-s}$ gives the desired upper bound, proving Theorem \ref{thm:decodability_prob}.

\section{Conclusions and Future Work}

In this paper, we argued that in the serverless setting---where communication costs greatly outweigh computation costs---performing some redundant computation based on ideas from coding theory will outperform speculative execution. Moreover, the design of such codes should leverage locality to attain low encoding and decoding costs.
Our proposed  scheme for coded matrix-matrix multiplication outperforms the widely used method of speculative execution and existing popular coded computing schemes in a serverless computing environment. 
All three stages of the coded approach are amenable to a parallel implementation, utilizing the dynamic scaling capabilities of serverless platforms. 
We showed that our proposed scheme is asymptotically optimal in terms of decoding time and further quantified the communication costs during decoding through probabilistic analysis.
Additionally, we derived an upper bound on the probability of being unable to decode stragglers. 

The proposed schemes for fault/straggler mitigation are \textit{universal} in the sense that they can be applied to many existing algorithms without changing their outcome. This is because they mitigate stragglers by working on low-level steps of the algorithm which are often the computational bottleneck, such as matrix-vector or matrix-matrix multiplication, thus not affecting the algorithm from the application or user perspective.
In the future, we plan to devise similar schemes for other matrix operations such as distributed QR decomposition, Gaussian elimination, eigenvalue decomposition, etc.
Eventually, we will create a software library implementing the proposed algorithms for running massive-scale Python code on AWS Lambda. This library would provide a seamless experience for users: they will execute their algorithms on serverless systems (using frameworks such as Pywren \cite{pywren}) as they normally would, and our algorithms can be automatically invoked ``under the hood" to introduce fault/straggler-resilience, 
thus aligning with the overarching goal of serverless systems to reduce management on the user front.


\bibliographystyle{IEEEtran}
\bibliography{bibli}
\end{document}